\newtheorem{theorem}{Theorem}
{}
\newtheorem{conjecture}{Conjecture}
\newtheorem{lemma}{Lemma}
\newtheorem{observation}{Observation}
\newtheorem{corollary}{Corollary}
\newenvironment{proof}{{\bf Proof}: } {}
\newcommand{\hqed}{\ensuremath{\hfill\Box}} 
\newcommand{\CC}{{\mathcal C}}
\def\qed{\ensuremath{\hfill\Box}}
\newif\ifcomment\commentfalse
\def\commentON{\commenttrue}
\long\outer\def\bc#1\ec{{\ifcomment \sloppy  \textcolor{red}{
{ {#1}}}\fi }}
\long\outer\def\bcr#1\ecr{{\ifcomment \sloppy  \textcolor{purple}{\#  \dotfill
{ {#1}}  \dotfill \#}\fi }}
\long\outer\def\BC#1\EC{{\ifcomment \sloppy \par \textcolor{blue}{\#  \dotfill
{\textsc{#1}} \dotfill \#} \par \fi }}
\long\outer\def\BCR#1\ECR{{\ifcomment \sloppy \par \textcolor{brown}{\#  \dotfill
{\textsc{#1}} \dotfill \#} \par \fi }}
\begin{document}

\title{The traveling salesman problem on cubic and subcubic graphs\thanks{This research was partially supported by Tinbergen Institute, the Netherlands and the Natural Sciences and Engineering Research Council of Canada.}
}

\author{Sylvia Boyd\thanks{School of Information Technology and Engineering (SITE), University of Ottawa, Ottawa, Canada. \texttt{sylvia@site.uottawa.ca}} \and Ren{\'e} Sitters\thanks{Department of Operations Research, VU University Amsterdam, The Netherlands \texttt{\{r.a.sitters,suzanne.vander.ster,l.stougie\}@vu.nl}} \and Suzanne van der Ster\footnotemark[3] \and Leen Stougie\footnotemark[3]~\thanks{CWI, Amsterdam, The Netherlands \texttt{stougie@cwi.nl} }}

\maketitle

\begin{abstract}
We study the Travelling Salesman Problem (TSP) on the metric completion of cubic and subcubic graphs, which is known to be NP-hard.  The problem
is of interest because of its relation to the famous 4/3 conjecture for metric TSP, which says that the integrality gap, i.e., the worst case
ratio between the optimal values of the TSP and its linear programming relaxation (the subtour elimination relaxation), is $4/3$. We present the first algorithm for cubic graphs with approximation ratio 4/3. The proof uses polyhedral techniques in a surprising way, which is of independent interest. In fact we prove constructively that for any cubic graph on $n$ vertices a tour of length $4n/3-2$ exists, which also implies the 4/3 conjecture, as an upper bound, for this class of graph-TSP.

 Recently, M\"omke and Svensson presented a randomized algorithm that gives a 1.461-approximation for graph-TSP on general graphs and as a side result a 4/3-approximation algorithm for this problem on subcubic graphs, also settling the 4/3 conjecture for this class of graph-TSP. We will present a way to derandomize their algorithm which leads to a smaller running time than the obvious derandomization. All of the latter also works for multi-graphs.
\end{abstract}

\section{Introduction}\label{sec:intro}

Given a complete undirected graph $G = (V, E)$ with vertex set $V$, $|V| =n$, and edge set $E$, with non-negative edge costs $c \in {\mathbf R}^E$, $c \neq 0$,
the well-known \emph{Traveling Salesman Problem} (TSP) is to find a Hamiltonian cycle in $G$ of minimum cost.  When the costs
satisfy the triangle inequality, i.e. when $c_{ij} + c_{jk} \geq c_{ik}$ for all $i,j,k \in V$, we call the problem
\emph{metric}.  A special case of the metric TSP is the so-called \emph{graph-TSP}, where, given an undirected, unweighted
underlying graph $G=(V,E)$, a complete weighted graph on $V$ is formed by defining the cost between two vertices as the number of
edges on the shortest path between them. This new graph is known as the \emph {metric completion} of $G$. Equivalently, this can be formulated as the problem of finding a spanning Eulerian multi-subgraph $H=(V, E')$ of $G$ with a minimum number of edges, which can be transformed into a graph-TSP tour of $G$ of cost $|E'|$ and vice versa.

The TSP is well-known to be NP-hard \cite{HamNPC}, even for the special cases of graph-TSP. As noticed in \cite{GKP}, APX-hardness follows rather
straightforwardly from the APX-hardness of (weighted) graphs with edges of length 1 or 2 ((1,2)-TSP) (Papadimitriou and Yannakakis \cite{PY}),
even if the maximum degree is $6$.

In general, the TSP cannot be approximated in polynomial time to within any constant unless $P = NP$, however for the metric TSP there exists the elegant
algorithm due to Christofides \cite{chr} from $1976$ which gives a $3/2$-approximation.  Surprisingly, in over three decades no one has found an
approximation algorithm which improves upon this bound of $3/2$, and the quest for finding such
improvements is one of the most challenging research questions in combinatorial optimization.

A related approach for finding approximated TSP solutions is to study the \emph{integrality gap} $\alpha(TSP)$, which is the
worst-case ratio between the optimal solution for the TSP problem and the optimal solution to its linear programming relaxation,
the so-called \emph{Subtour Elimination Relaxation} (henceforth SER) (see \cite{gen} for more details). The value $\alpha(TSP)$
gives one measure of the quality of the lower bound provided by SER for the TSP. Moreover, a polynomial-time constructive proof
for value $\alpha(TSP)$ would provide an $\alpha(TSP)$-approximation algorithm for the TSP.

For metric TSP, it is known that $\alpha(TSP)$ is at most $3/2$ (see Shmoys and Williamson \cite{sw}, Wolsey \cite{w}), and is
at least $4/3$. A ratio of $4/3$ is reached asymptotically by the family of graph-TSP problems consisting of two vertices joined
by three paths of length $k$; see also ~\cite{gen} for a similar family of graphs giving this ratio. However, the exact value of
$\alpha(TSP)$ is not known, and there is the following well-known conjecture, which dates back to the early 1980's:
\begin{conjecture} \label{mainconj}
For the metric TSP, the integrality gap $\alpha(TSP)$ for SER is $4/3$.
\end{conjecture}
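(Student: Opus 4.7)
The plan is to attack Conjecture~\ref{mainconj} through three successive reductions, with the last invoking the cubic/subcubic results of this paper as a black box. First I would argue that, for the purpose of bounding the integrality gap, one may restrict attention to graph-TSP instances. Starting from an optimal SER solution $x^*$ of a metric instance $(V,c)$ and a common denominator $D$ of the rational entries of $x^*$ and $c$, one subdivides every edge $uv$ with $c_{uv}>0$ into a path of $D\,c_{uv}$ unit-length edges; any tour in the resulting unweighted graph pushes back to a tour of equal cost in $(V,c)$, and after careful bookkeeping both SER and OPT scale by the same factor, so the supremum of the integrality gap over metric instances equals its supremum over graph-TSP instances.

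Second, I would try to reduce general graph-TSP to subcubic graph-TSP. The natural operation is vertex-splitting: each vertex of degree $d\ge 4$ is replaced by a gadget (for instance a short path of new degree-$3$ vertices) with the originally incident edges distributed among the gadget's endpoints, and one compares SER and OPT on the new instance with their old values. This step is where the main obstacle lies: no such reduction is currently known that preserves the integrality gap tightly enough to transfer a $4/3$ bound, and it is essentially this gap in our toolkit that keeps Conjecture~\ref{mainconj} open despite the results of the present paper.

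Third, on the resulting subcubic instance $H$ on $m$ vertices, I would invoke the paper's main theorem: for cubic $H$ a spanning Eulerian multi-subgraph with $4m/3-2$ edges is constructed explicitly, and the derandomized M\"omke--Svensson argument extends the $4/3$ bound to all subcubic graph-TSP. Since every feasible SER solution on an unweighted graph-TSP instance satisfies $\sum_e x_e \ge m$ (summing the degree constraints $x(\delta(v))=2$ gives $2\sum_e x_e = 2m$), the constructed tour has cost at most $(4/3)\cdot\mathrm{SER}(H)$, which, pulled back through the first two reductions, yields the desired $4/3$ ratio on the original metric instance.

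The decisive unresolved ingredient is the second reduction, and the plan therefore keeps two parallel tracks in view. One is a structural theorem asserting that worst-case extreme points of SER can be taken to be half-integral with (sub)cubic support; combined with the cubic analysis of this paper this would close the conjecture without any vertex-splitting argument. The other is a direct random-tree-plus-matching rounding scheme in the spirit of Oveis Gharan--Saberi--Singh, generalised from graph-TSP to arbitrary metrics, bypassing the reduction entirely. Each route is a substantial research program, consistent with the three-decade-old open status of the conjecture.
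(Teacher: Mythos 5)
There is a genuine gap here, and in fact the honest answer is that no proof exists: the statement you are asked about is Conjecture~\ref{mainconj}, which the paper explicitly leaves open (it has been open since the early 1980s) and for which the paper supplies no proof at all. What the paper actually proves is only the upper-bound direction of the conjecture restricted to (sub)cubic graph-TSP, namely that every bridgeless simple cubic graph on $n\geq 6$ vertices has a tour of length at most $4n/3-2$ and every subcubic bridgeless multigraph one of length at most $4n/3-2/3$, combined with the observation that $n$ (respectively $n+2h-s$ in the presence of bridges) lower-bounds SER. Your proposal correctly identifies this and correctly flags your second reduction (general graph-TSP to subcubic graph-TSP) as the missing ingredient; since you yourself concede that this step is not known, what you have written is a research plan rather than a proof, and it cannot close the conjecture.

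Beyond the admitted hole, your first reduction is also flawed as stated. Subdividing an edge $uv$ of cost $c_{uv}$ into a path of $Dc_{uv}$ unit edges introduces internal vertices that the graph-TSP tour of the new instance is \emph{required} to visit. If the optimal tour of the original metric instance does not use the edge $uv$, the new tour must still cover all of these subdivision vertices, at a cost of roughly $2Dc_{uv}$ for an excursion into the path and back; hence $\opt$ of the subdivided instance is not $D$ times $\opt$ of the original, and the claimed ``both SER and OPT scale by the same factor'' fails. (The known worst-case examples for the $4/3$ lower bound do happen to be graph-TSP instances, but the equality of the supremum of the gap over all metrics with its supremum over graph metrics is itself not established.) Also a small point: summing the degree constraints $x(\delta(v))=2$ gives $\sum_e x_e = m$ exactly, not merely $\geq m$; this is the footnoted argument the paper uses. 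In short, the conjecture remains open, the paper proves only its restriction to (sub)cubic graph-TSP, and both of the reductions you propose to bridge from there to the general metric case are either unknown or incorrect as written.
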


\noindent As with the quest to improve upon Christofides' algorithm, the quest to prove or disprove this conjecture has been
open for almost 30 years, with very little progress made.

A graph $G=(V,E)$ is \emph{cubic} if all of its vertices have degree 3, and \emph{subcubic} if they have degree at most 3.
A \emph{multigraph} is one in which multiple copies of edges  (i.e. parallel edges) are allowed between vertices (but loops are not allowed) and a graph is called \emph{simple} if there are no multiple copies of edges. A {\em cycle} in a graph is a closed path having no repetition of vertices.   A {\em cycle cover} (also sometimes referred to as
a 2-factor or a perfect 2-matching) of $G$ is a set of vertex disjoint cycles that together span all vertices of $G$. A {\em perfect matching} $M$ of a graph $G$ is a set of vertex-disjoint edges of $G$ that together span all vertices of
$G$.

In this paper we study the graph-TSP problem on cubic and subcubic graphs.  Note that the graphs in the family described above
giving a worst-case ratio of $4/3$ for $\alpha(TSP)$ are graph-TSPs on bridgeless subcubic graphs. Also, solving the graph-TSP on such graphs would solve the problem of deciding whether a given bridgeless cubic graph $G$ has a
Hamilton cycle, which is known to be NP-complete, even if $G$ is also planar (Garey  et al. \cite{Gar}) or bipartite
(Akiyama  et al. \cite{Aki}). In \cite{CsabaKK2002} there is an unproven claim that (1,2)-TSP is APX-hard when the graph of
edges of length $1$ is cubic, which would imply APX-hardness of graph-TSP on cubic graphs.
Also note that the $3/2$ ratio of Christofides' algorithm is tight for cubic graph-TSP (see \cite{bsss}).

In $2005$, Gamarnik et al. in \cite{GLS} provided the first approximation improvement over Christofides' algorithm for
graph-TSP on 3-edge connected cubic graphs. They provide a polynomial-time
algorithm that finds a Hamilton cycle of cost at most $\tau n$ for $\tau = (3/2 - 5/389) \approx 1.487$.  Since $n$ is a lower bound for the
optimal value for graph-TSP on such graphs, as well as the associated $SER$\footnotemark \footnotetext{To see that $n$ is a lower bound for SER,
sum all of the so-called "degree constraints" for SER.  Dividing the result by $2$ shows that the sum of the edge variables in
any feasible SER solution equals n.}, for any value of $\tau$, this results in a $\tau$-approximation for the
graph-TSP, as well as proves that the integrality gap $\alpha(TSP)$ is at most $\tau$ for such problems.

\smallskip
Only recently the work by Gamarnik et al. has been succeeded by a sudden outburst of results on the approximation of graph-TSP and its SER, which we discuss below.

In 2009 and 2010, polynomial-time algorithms that find triangle- and square-free cycle covers for cubic 3-edge connected graphs have been developed (see \cite{BV},\cite{BoydIT2010} and \cite{HartvigsenLi2009}). These papers do not explicitly study the graph-TSP problem, but as a by-product, these algorithms provide a cycle cover with at most $n/5$ cycles, and thus give a $(1.4n - 2)$-approximation using an approach that we explain below under the name Approach 1.

We made the next improvement (see Boyd et al. \cite{bsss}) by showing that every bridgeless cubic graph has a TSP-tour of length at most $4n/3-2$ when $n\geq 6$. This was the first result which showed that Conjecture \ref{mainconj} is true for graph-TSP, as an upper bound, on cubic bridgeless graphs and it automatically implies a $4/3$-approximation algorithm for this class of graph-TSP. The results extend to all cubic graphs. They have appeared in a preliminary form in \cite{bsss}. The proof of the $4n/3-2$-bound uses polyhedral techniques in a surprising way, which may be more widely applicable. We present a complete proof of the result in Section 2. Just like Garmanik et al. we make use of the following well-known
theorem due to Petersen \cite{Petersen}:

\begin{theorem}  ({Petersen \cite{Petersen}}).\label{thm:peterson}
Any bridgeless cubic graph can be partitioned into a cycle cover and a perfect matching.
\end{theorem}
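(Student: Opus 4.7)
The plan is to reduce the statement to the existence of a perfect matching. In a cubic graph $G$, if $M$ is a perfect matching then $G \setminus M$ is a spanning $2$-regular subgraph, which by definition is a disjoint union of cycles covering all vertices, i.e., a cycle cover. Hence the theorem follows once we show that every bridgeless cubic graph has a perfect matching.

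To establish the existence of the perfect matching, I would invoke Tutte's theorem: $G$ has a perfect matching iff for every $S \subseteq V(G)$ the number of odd components of $G - S$ is at most $|S|$. So fix an arbitrary $S \subseteq V$, let $C_1,\ldots,C_q$ be the odd components of $G-S$, and let $e_i$ denote the number of edges of $G$ between $C_i$ and $S$. The goal is to prove $q \le |S|$ by a double counting argument on the edges crossing between $S$ and $V \setminus S$.

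The key computation is the parity and lower bound on $e_i$. Summing degrees inside $C_i$ gives $\sum_{v \in C_i} \deg_G(v) = 3|C_i|$, while the edges with both endpoints in $C_i$ contribute $2|E(C_i)|$ to that sum, so $e_i = 3|C_i| - 2|E(C_i)|$. Since $|C_i|$ is odd, $e_i$ is odd, so $e_i \in \{1,3,5,\ldots\}$. Here is where the bridgeless hypothesis enters: if $e_i = 1$, the single edge from $C_i$ to $S$ would be a bridge of $G$, contradiction. Hence $e_i \ge 3$ for every $i$.

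Now double-count the edges between $S$ and $V \setminus S$. On one hand the total is at least $\sum_i e_i \ge 3q$; on the other hand each vertex of $S$ has degree $3$ in $G$, so this total is at most $3|S|$. Therefore $3q \le 3|S|$, giving $q \le |S|$, so Tutte's condition is satisfied and a perfect matching $M$ exists. As noted in the first paragraph, $G \setminus M$ is then the desired cycle cover, completing the proof. The only delicate point is ruling out the case $e_i = 1$, which is exactly where the ``bridgeless'' hypothesis is indispensable; without it, a single-edge connection from an odd component to $S$ would break the counting inequality.
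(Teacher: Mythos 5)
Your proof is correct: the reduction to the existence of a perfect matching (deleting a perfect matching from a cubic graph leaves a spanning $2$-regular subgraph, i.e., a cycle cover) is exactly right, and your verification of Tutte's condition is the standard, sound argument --- the parity computation $e_i = 3|C_i| - 2|E(C_i)|$ forces $e_i$ odd, bridgelessness rules out $e_i=1$, and the count $3q \le \sum_i e_i \le 3|S|$ closes the case. Note, however, that the paper does not prove this statement at all; it is quoted as Petersen's classical theorem with a citation. The closest the paper comes to proving it is Lemma~\ref{lem:ccofmatchings}, which establishes a \emph{stronger} fact by a polyhedral route: the vector $\frac{1}{3}\chi^E$ satisfies Edmonds' linear description of the perfect matching polytope (the odd-cut constraints hold because every cut of a cubic graph with an odd shore has size at least $3$ --- essentially your parity argument in disguise), hence it is a convex combination of perfect matchings, which in particular shows perfect matchings exist and moreover that they can be chosen to hit every $3$-cut exactly once. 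Your Tutte-based argument is more elementary and self-contained but yields only existence; the paper's polyhedral argument requires Edmonds' theorem as a black box but delivers the fractional decomposition that drives the whole $4n/3$ analysis. Both are valid; they serve different purposes.
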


\noindent The obvious approach that follows from this theorem is:

\smallskip
\noindent {\bf {Approach 1}}:  Given a cubic bridgeless graph $G$ with $n$ vertices, if one can find a cycle cover of $G$ with at most $k$ cycles, then by contracting the cycles, adding a doubled spanning tree in
the resulting graph and uncontracting the cycles, one would obtain a spanning  Eulerian multi-subgraph of $G$ with no more than $n+ 2(k -1) = n + 2k - 2$ edges.

\bigskip
\noindent Approach 1 may exclude the optimal solution. For example, consider the Petersen graph.  Using this approach, the smallest possible solution will have 12 edges, however, there exists a solution with 11 edges, found by taking a 9-cycle plus a single vertex joined by two parallel edges.
Any spanning Eulerian subgraph can be formed by taking a set of cycles and singleton vertices and connecting everything by a doubled tree. Hence, the following approach (which yields an optimal solution in the case of the Petersen graph) can be used:

\smallskip
\noindent {\bf {Approach 2}}: Given a cubic bridgeless graph $G = (V,E)$, if one can find a set $S \subset V$  of singletons and a cycle cover of $G\backslash S$ with at most $k$ cycles, then one can obtain a spanning Eulerian multi-subgraph of $G$ with no more than $(n-|S|) + 2(k + |S| - 1) = n +2k + |S| - 2$ edges.\bigskip

We show that there exists a set of singletons and a cycle cover for Approach 2 such that $|S| + 2k$ is at most $n/3$ for $n\geq 6$, thus obtaining the result. The construction of such a set starts from a cycle cover for $G$, formed by deleting a perfect matching from the graph (cf. Theorem~\ref{thm:peterson}). Local manipulations of the cycle cover leads to larger cycles and singleton vertices. The same approach is used by Garmanik et al.. However, they use only one perfect matching to get to their result while we select a set of polynomially-many perfect matchings for $G$ such that a convex combination of the matchings gives every edge a weight of $1/3$. We prove that the sets of cycles and singletons obtained after local manipulations have the acclaimed size on average in this convex combination. Finding this set of perfect matchings dominates the complexity of the algorithm and uses the $O(n^6)$-time algorithm of Barahona \cite{Bar} to achieve this task.

In \cite{bsss} we also show a bound of  $(7n/5-4/5)$ on the length of a graph-TSP tour for subcubic bridgeless graphs. We conjectured that the true bound should be $(4n/3-2/3)$, which is equal to the lower bound we established for this class of graphs. For reasons that become clear below we do not give the details of this result here but instead refer to the extended version \cite{bsssweb} of \cite{bsss} for its proof.

A little bit later than our work, but independent of it, Aggarwal et al. ~\cite{AggarwalGG2011} announced an
alternative $4n/3$ approximation for 3-edge connected cubic graphs only, but with a simpler algorithm. Their algorithm is based on the idea of finding a triangle- and square-free cycle cover, then shrinking and "splitting off" certain 5-cycles in the cover.

Again, more or less simultaneously, Gharan et al. \cite{GharanSS2010}
announced a randomized $(3/2-\epsilon)$-approximation for graph-TSP for some $\epsilon>0$, which is the first polynomial-time algorithm with an approximation ratio strictly less than 3/2 for graph-TSP on general graphs. Their approach is very different from the one presented here.

Very recently, M\"omke and Svensson \cite{momke} came up with a powerful new approach, which enabled them to prove a $1.461$-approximation for graph-TSP for general graphs. In the context of the present paper it is interesting that their approach led to a bound of $(4n/3 -2/3)$ on the graph-TSP tour for all subcubic bridgeless graphs, thus improving upon our above mentioned $(7n/5 - 4/5)$ bound and settling our conjecture affirmatively.

Viewed in a slightly different way than M\"omke and Svensson present it, for cubic graphs their algorithm is based on the following approach:

\smallskip

\noindent {\bf {Approach 3}}: Given a cubic bridgeless graph $G=(V,E)$, if one can find, in polynomial time,  a spanning tree $T^*$ of $G$, and a perfect matching $M^*$ of $G$, such that $|M^* \cap T^*|$ is at most $p$, and by taking $E$, doubling the edges in $M^*\cap T^*$, and removing the edges in $M^* \setminus T^*$, one obtains a spanning Eulerian multi-subgraph $H$, with edge set $E(H)$, then
\begin{eqnarray*}
\left| {E(H)} \right| &= &\left| {E} \right| + \left| {M^*\cap T^*} \right| - \left| {M^* \setminus T^*} \right| \\
&=& 3n/2 + \left| {M^*\cap T^*} \right| - (n/2 - \left| {M^*\cap T^*} \right|) \\
&=& n + 2\left| {M^*\cap T^*} \right| \\
&\leq& n + 2p.
\end{eqnarray*}
M\"omke and Svensson use this approach in a randomized algorithm, using the convex combination of the perfect matchings mentioned in our approach as a probability distribution. This leads to an obvious derandomization, by just considering all extreme points in the convex combination. In Section \ref{sec:approach3} we take their approach, explain it in a slightly different way and design a way of derandomizing it for graph-TSP on subcubic bridgeless multigraphs, which leads to a faster algorithm $\--$ it has complexity $O(n^2log n)$ rather than $O(n^6)$.  We remark that this can be extended to also provide a faster derandomization for the general graph-TSP case (see Section \ref{sec:3.1}).

As a side result of M\"omke and Svensson's result, hence also of ours, for any given cubic bridgeless multigraph $G$ we obtain a cycle cover of $G$ with at most $\lfloor n/6+2/3 \rfloor$ cycles. In fact, for cubic graphs any solution found by Approach 3 is a solution for Approach 1 and vice versa:
Given a perfect matching and spanning tree with intersection of size $p$, then removing the matching leaves a cycle cover with  at most $p+1$ cycles.
Vice versa, given a cycle cover with  $p+1$ cycles, then we can remove all but $p$ of the matching edges such that the graph stays connected.

Thus far, all results we have mentioned have dealt with bridgeless graphs. In Section \ref{sec:bridges} we show how bridges are easily
incorporated to achieve the same performance guarantees.

We conclude this section with a survey of some of the other relevant literature. Grigni et al. \cite{GKP} give a polynomial-time approximation scheme
(PTAS) for graph-TSP on planar graphs (this was later extended to a PTAS for the weighted planar graph-TSP by Arora et al. \cite{AGKKW}).
For graph $G$ containing a cycle cover with no triangles, Fotakis and Spirakis \cite{FS} show that graph-TSP is approximable in polynomial time
within a factor of $17/12 \approx 1.417$ if $G$ has diameter $4$ (i.e. the longest path has length 4), and within $7/5 = 1.4$ if $G$ has diameter
3.  For graphs that do not contain a triangle-free cycle cover they show that if $G$ has diameter 3, then it is approximable in polynomial time
within a factor of $22/15 \approx 1.467$. For graphs with diameter 2 (i.e. TSP(1,2)), a $7/6 \approx 1.167$-approximation for graph-TSP was
achieved by Papadimitriou and Yannakakis \cite{PY}, and improved to  $8/7 \approx 1.143$ by Berman and Karpinski \cite{BK}.

\section{The first $4n/3$-approximation result for bridgeless cubic graphs}\label{sec:IPCOproof}
In this section, we will prove the following:

\begin{theorem}\label{thm:tsptour2ECcubic}
Every bridgeless simple cubic graph $G=(V,E)$ with $n \geq 6$ has a graph-TSP tour of length at most $\frac{4}{3}n-2$.
\end{theorem}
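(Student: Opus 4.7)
The plan is to apply Approach 2 from the introduction: I will produce a set $S \subseteq V$ of singletons together with a cycle cover of $G \setminus S$ consisting of $k$ cycles, in such a way that $|S| + 2k \leq n/3 - 2$. By the calculation of Approach 2, this immediately yields a spanning Eulerian multi-subgraph of $G$ with at most $n + 2k + |S| - 2 \leq 4n/3 - 2$ edges, and hence a graph-TSP tour of the required length.

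The polyhedral ingredient is the following. Because $G$ is bridgeless and cubic, the vector $\frac{1}{3}\chi(E)$ (where $\chi$ denotes the edge-incidence vector) lies in the perfect matching polytope: its degree constraints are satisfied with equality, and every odd cut of a bridgeless cubic graph contains at least three edges, so its $\frac{1}{3}\chi(E)$-value is at least $1$. By Edmonds' characterization, $\frac{1}{3}\chi(E)$ can be written as a convex combination $\sum_i \lambda_i \chi(M_i)$ of perfect matchings $M_i$; moreover Barahona's $O(n^6)$ algorithm produces such a decomposition with polynomially many matchings in the support. I would use this decomposition as a probability distribution over perfect matchings in which each edge lies with probability exactly $1/3$.

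For each matching $M_i$ in the support, Petersen's theorem guarantees that $E \setminus M_i$ is a cycle cover $\mathcal{C}_i$ of $G$. I would now design a deterministic local-modification procedure that, applied to $\mathcal{C}_i$, produces a singleton set $S_i$ and a cycle cover of $G \setminus S_i$ with $k_i$ cycles. The procedure is driven by the short cycles of $\mathcal{C}_i$: whenever $\mathcal{C}_i$ contains a 3-cycle or a 4-cycle, I would either (a) swap one cycle edge for an incident matching edge so as to merge this small cycle with a neighbouring cycle, or (b) delete a vertex of the small cycle to $S_i$, using the remaining matching edges to rejoin the broken cycle. The swaps and extractions are chosen so that each small cycle either disappears or contributes at most a bounded amount to $|S_i| + 2k_i$ depending on how its edges are classified by $M_i$ versus other matchings in the support. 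The final step is to compute $\mathbb{E}_i[\,|S_i| + 2k_i\,]$ under the distribution $\{\lambda_i\}$ and show that it is at most $n/3 - 2$; then at least one $i$ achieves this bound, giving the desired $S$ and cycle cover.

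The main obstacle is the averaging argument for the local manipulation. A single matching may leave $\mathcal{C}_i$ with many triangles, so bounding $|S_i|+2k_i$ matching-by-matching is hopeless; the whole point of averaging over the distribution is that each edge is in a matching only with probability $1/3$, so a triangle cannot be ``bad'' in every matching simultaneously. Making this precise requires defining the local procedure so that the charge attributed to each short cycle can be written as a linear function of the $\chi(M_i)$-values on a small set of nearby edges, and then computing the expectation edge by edge. Handling the interaction between small cycles that share vertices, and ensuring that extracting a singleton never disconnects a surviving cycle or re-creates a worse small cycle elsewhere, will be the most delicate part of the proof; the additive $-2$ in the bound comes from one ``free'' saving that can be extracted once from the global structure (e.g.\ from the root of the auxiliary spanning tree used in Approach~2).
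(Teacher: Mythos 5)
Your high-level strategy is the same as the paper's: write $\frac{1}{3}\chi^{E}$ as a convex combination of perfect matchings via Edmonds' description of the perfect matching polytope and Barahona's algorithm, take the corresponding cycle covers, improve each one by local operations around short cycles, and argue by averaging over the convex combination that some cover is good enough. However, what you have written is a plan rather than a proof: the entire content of the theorem lives in the parts you defer. You do not specify the local operations, you do not define the charging scheme, and you explicitly leave the expectation computation $\mathbb{E}_i[|S_i|+2k_i]\le n/3-2$ as ``the most delicate part.'' In the paper this is done by fixing a vertex $v$, defining its contribution $z_i(v)=(\ell+2)/h$ in each cover, and carrying out a case analysis over how the matchings intersect the short cycle containing $v$ (partitioning the matchings into classes $X_0,X_1,X_2$ by the number of cycle edges they contain and using that each edge has total weight $1/3$); none of that case analysis appears in your proposal, and it is exactly where the constant $4/3$ is earned.

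Two specific missing ingredients would make your plan fail as stated. First, you use arbitrary perfect matchings from the decomposition, but the argument needs the strengthened fact that every matching appearing in a convex representation of $\frac{1}{3}\chi^{E}$ is automatically a \emph{3-cut} perfect matching (meeting every 3-cut in exactly one edge). This is what guarantees that no cycle cover $\CC_i$ contains a triangle or a chorded 5-cycle; a triangle component contributes $5/3$ per vertex, which no amount of averaging repairs, so your plan to ``handle'' 3-cycles by local swaps or singleton extraction is attacking a case that must instead be shown never to occur. Second, you ignore chords entirely: a 4-cycle with a chord sitting inside a $p$-rainbow configuration defeats the averaging bound, and the paper must preprocess the graph with a 2-cut reduction (Lemma~\ref{lem:2cut}) to remove all $p$-rainbows before the main argument starts. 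Without these two ingredients, and without an actual execution of the averaging computation, the proposal does not constitute a proof.
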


\noindent We begin by giving some definitions, and preliminary results.

For any vertex subset
$S\subseteq V$, $\delta(S)\subseteq E$, defined as  the set of edges connecting $S$ and $V \setminus S$, is called the {\em cut} induced by $S$. A
cut of cardinality $k$ is called a \emph{$k$-cut} if it is minimal in the sense that it does not contain any cut as a proper subset. A \emph{k-cycle} is a cycle containing $k$ edges, and a \emph{chord} of a cycle of $G$ is
an edge not in the cycle, but with both ends $u$ and $v$ in the cycle.  An {\em
Eulerian subgraph} of $G$ is a connected subgraph where multiple copies of the edges are allowed, and all vertices have even
degree.  Note that such a subgraph has an Eulerian tour of length equal to its number of edges, which can be "short-cut" into a TSP tour of the same length for the associated graph-TSP problem.


As mentioned in Section \ref{sec:intro}, Petersen~\cite{Petersen} states that every bridgeless cubic graph contains a perfect matching. Thus the
edges of any bridgeless cubic graph can be partitioned into a perfect matching and an associated cycle cover. This idea is
important for our main theorem, and we give a useful strengthened form of it below in Lemma \ref{lem:ccofmatchings}.

For any edge set $F \subseteq E$, the \emph{incidence vector of $F$} is the vector $\chi^F \in \lbrace 0,1 \rbrace^E$ defined by
$\chi^F_e =1$ if $e \in F$, and $0$ otherwise. For any edge set $F \subseteq E$ and $x \in \mathbf{R}^E$, let $x(F) = \sum_{e \in F}x_e$.
Given graph $G$, the associated \emph{perfect matching polytope}, $P^{M}(G)$, is the convex hull of all incidence vectors of the
perfect matchings of $G$, which Edmonds \cite{edmonds} shows to be given by:
\begin{eqnarray*}
x(\delta(v)) = 1, &\quad&  \forall v\in V, \\
x(\delta(S)) \geq 1, &\quad& \forall S\subset V,\ |S|\ {\rm odd}, \\
0\leq x_e \leq 1, &\quad& \forall e\in E.
\end{eqnarray*}

Using this linear description and similar methods to those found in \cite{KKN} and \cite{NP}, we have the following strengthened
form of Petersen's Theorem,  in which we use the notion of a {\em 3-cut perfect matching}, which is a perfect matching that intersects every 3-cut of the graph in exactly one edge:
\begin{lemma}\label{lem:ccofmatchings}
Let $G = (V,E)$ be a bridgeless cubic graph and let $x^* = {1\over 3}\chi^{E}$.  Then $x^*$ can be expressed as a convex
combination of incidence vectors of $3$-cut perfect matchings, i.e. there exist $3$-cut perfect matchings $M_i$, $i = 1, 2,...,
k$ of $G$ and positive real numbers $\lambda_i$, $i = 1, 2,..., k$ such that
\begin{eqnarray}\label{convexcombo}
x^*= \sum_{i=1}^k \lambda_i(\chi^{M_i}) \mbox{ and } \sum_{i=1}^k \lambda_i = 1.
\end{eqnarray}
\end{lemma}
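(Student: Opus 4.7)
The strategy is to use Edmonds' explicit linear description of $P^M(G)$ in two moves: first show $x^* = \tfrac{1}{3}\chi^E$ lies in $P^M(G)$, then show it lies on a face whose extreme points must already be $3$-cut perfect matchings.

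The first step is to verify the three families of constraints. The degree equation is immediate: since $G$ is cubic, $x^*(\delta(v)) = 3/3 = 1$ for every vertex $v$, and the bounds $0\le 1/3\le 1$ are trivial. For the odd-set inequality, fix $S \subseteq V$ with $|S|$ odd. The parity identity $2|E(S)| + |\delta(S)| = \sum_{v \in S} d(v) = 3|S|$ forces $|\delta(S)|$ to be odd; since $G$ is bridgeless, $|\delta(S)| \ge 2$, and oddness therefore gives $|\delta(S)| \ge 3$. Hence $x^*(\delta(S)) = |\delta(S)|/3 \ge 1$, so $x^* \in P^M(G)$.

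The second step is the face observation. For every $3$-cut $\delta(S)$ we have $x^*(\delta(S)) = 1$ exactly, so $x^*$ lies on the face $F$ of $P^M(G)$ obtained by turning all $3$-cut inequalities into equalities. Since $P^M(G)$ is the convex hull of perfect-matching incidence vectors, Carath\'eodory's theorem supplies a finite decomposition $x^* = \sum_{i=1}^k \lambda_i \chi^{M_i}$ with $\lambda_i > 0$, $\sum_i \lambda_i = 1$, and each $M_i$ a perfect matching. For any $3$-cut $\delta(S)$ one has
\begin{equation*}
1 \;=\; x^*(\delta(S)) \;=\; \sum_{i=1}^k \lambda_i\,|M_i \cap \delta(S)|,
\end{equation*}
and each summand $|M_i \cap \delta(S)|$ is a positive odd integer (being the intersection of a perfect matching with an odd cut of size $3$). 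Combined with $\sum_i \lambda_i = 1$, this forces $|M_i \cap \delta(S)| = 1$ for every $i$ in the support, so every $M_i$ is a $3$-cut perfect matching.

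The only subtle step is the face/parity argument in the last paragraph, which is the complementary-slackness-style observation that tightness of the $3$-cut inequalities for $x^*$ propagates to every matching appearing with positive weight. Everything else is bookkeeping on Edmonds' inequalities. To obtain a decomposition with $k$ polynomially bounded in $n$, one can invoke Barahona's $O(n^6)$ algorithm referenced in the introduction, which explicitly produces such a convex combination over the matching polytope.
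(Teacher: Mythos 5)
Your proof is correct and follows essentially the same route as the paper: verify that $x^*$ satisfies Edmonds' description of $P^M(G)$ (using parity plus bridgelessness to get $|\delta(S)|\ge 3$ for odd $S$), and then use tightness of the $3$-cut constraints to force every matching in the support to meet each $3$-cut exactly once. Your final averaging step (a convex combination of odd positive integers equal to $1$ forces every term to be $1$) is just a slightly more compact packaging of the paper's computation that the weight $\alpha_0$ of matchings containing all three cut edges is zero.
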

\begin{proof}
Since both sides of any 2-cut in a cubic graph have an even number of vertices, it is easily verified that $x^*$  satisfies the linear description
above, and thus lies in $P^M(G)$.  It follows that $x^*$ can be expressed as a convex combination of perfect matchings of $G$, i.e. there exist
perfect matchings $M_i$, $i = 1, 2,..., k$ of $G$ and positive real numbers $\lambda_i$, $i = 1, 2,..., k$ such that~\eqref{convexcombo} holds.

To see that each perfect matching in~\eqref{convexcombo} is a $3$-cut perfect matching, consider any $3$-cut $\delta(S) = \{e_1, e_2, e_3\}$ of
$G$. Since each side of a 3-cut of any cubic graph must contain an odd number of vertices, any perfect matching must contain 1 or 3 edges of
$\delta(S)$. Let ${\cal M}_0$ be the set of perfect matchings from (\ref{convexcombo}) that contain all $3$ edges of the cut, and let ${\cal
M}_j$, $j = 1, 2, 3$ be the sets of perfect matchings that contain edge $e_j$. Define $\alpha_j= \sum_{M_i\in {\cal M}_j}\lambda_i$, $j=0,1,2,3$.
Then
\begin{eqnarray*}
\alpha_0+\alpha_1+\alpha_2+\alpha_3 = 1,\
\alpha_0+\alpha_1=1/3 \mbox{,    } \alpha_0+\alpha_2=1/3 \mbox{,    } \alpha_0+\alpha_3=1/3,
\end{eqnarray*}
which implies $\alpha_0=0$.$\hqed$
\end{proof}
\smallskip

The perfect matchings $M_i, i = 1, 2, ... k$ of Lemma \ref{lem:ccofmatchings} will be used in the proof of our main theorem in
the next section. Note that Barahona \cite{Bar} provides an algorithm to find for any point in $P^M(G)$ a set of perfect
matchings for expressing the point as a convex combination of their incidence vectors in $O(n^6)$ time, and with $k \leq 7n/2
-1$, for any graph $G$.

The idea we will use in the proof of our main theorem is as follows: By Petersen's Theorem we know we can always find a cycle cover of $G$.
Suppose that we can find such a cycle cover that has no more than $n/6$ cycles. Then, contracting the cycles, adding a doubled spanning tree in
the resulting graph and uncontracting the cycles would yield a spanning Eulerian subgraph with no more than $n+ 2(n/6 -1) = 4n/3 - 2$ edges. Together with the
obvious lower bound of $n$ on the length of any optimal graph-TSP tour, this yields an approximation ratio of $4/3$.  However, such a cycle cover does
not always exist (for example, consider the Petersen graph). Therefore, we take the $k$ cycle covers associated with the 3-cut matchings of
Lemma \ref{lem:ccofmatchings} and combine their smaller cycles into larger cycles or Eulerian subgraphs, such as to obtain $k$ covers of $G$ with
Eulerian subgraphs which, together with the double spanning tree, result in $k$ spanning Eulerian subgraphs of $G$ having an average number of edges of at most $4n/3$. Unless stated otherwise, an Eulerian subgraph is connected. As mentioned in the introduction, we may see each of these Eulerian subgraphs as cycles and singleton vertices connected by a doubled tree. For the ease of analysis we shall not make these decompositions explicit.
For the construction of larger Eulerian subgraphs the following lemma will be useful.
\begin{lemma}\label{lem:merge graphs}
Let $H_1$ and $H_2$ be connected Eulerian subgraphs of a (sub)cubic graph such that $H_1$ and $H_2$ have at least two vertices
in common and let $H_3$ be the {\em sum} of $H_1$ and $H_2$, i.e., the union of their vertices and the sum of their edges,
possibly giving rise to parallel edges. Then we can remove two edges from $H_3$ such that it stays connected and Eulerian.
\end{lemma}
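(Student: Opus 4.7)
The plan is to exhibit two parallel copies of a single shared edge in $H_3$ and remove them; the subcubic hypothesis is what makes such a shared edge exist, and the fact that the removed pair forms a $2$-cycle is what preserves both Eulerianness and connectedness.

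First I would record the structure of $H_1$ and $H_2$. Any vertex of a connected Eulerian subgraph of a (sub)cubic graph has even degree at most $3$, so its degree in the subgraph is exactly $2$. Thus each of $H_1,H_2$ is itself a single cycle. Next, let $u\in V(H_1)\cap V(H_2)$; by hypothesis such a $u$ exists, and in fact there are two such vertices. At $u$, cycle $H_1$ uses two incident edges and cycle $H_2$ uses two incident edges, while the ambient subcubic graph supplies at most three edges at $u$. A counting argument ($2+2>3$) forces $H_1$ and $H_2$ to share at least one edge incident to $u$. Call this shared edge $e=uu'$; note $u'\in V(H_1)\cap V(H_2)$ as well, so $u'$ also has degree $4$ in $H_3$.

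I would then remove the two parallel copies of $e$ in $H_3$ (one coming from $H_1$, one from $H_2$). To verify the Eulerian property it suffices to note that the removed subgraph is a $2$-cycle on $\{u,u'\}$: the degree of $u$ drops from $4$ to $2$, the degree of $u'$ drops from $4$ to $2$, and every other degree is unchanged, so all degrees remain even. To verify connectedness, observe that $H_1-e$ is a $u$-$u'$ path $P_1$ and $H_2-e$ is a $u$-$u'$ path $P_2$; the multigraph that remains after removing the two copies of $e$ from $H_3$ is precisely the sum $P_1+P_2$, and since both paths share the endpoint $u$ their union is connected and spans $V(H_1)\cup V(H_2)=V(H_3)$.

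The only real obstacle is Step 2, the existence of a shared incident edge at a common vertex; once that is in hand the rest is essentially forced by the fact that the two copies of $e$ form a closed Eulerian sub-multigraph whose removal cannot disconnect the remainder (the two paths $P_1,P_2$ still bind $u$ to $u'$). No further case analysis is needed because the argument only used one of the two common vertices guaranteed by the hypothesis; the second common vertex is used implicitly to confirm that we really are in a ``cycles-meeting-nontrivially'' regime rather than a degenerate one.
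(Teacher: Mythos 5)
There is a genuine gap, and it sits in your very first step. Under the paper's definition, an \emph{Eulerian subgraph} is allowed to contain multiple copies of edges, so a vertex of a cubic graph can have degree $4$ in such a subgraph (one of its three incident edges taken doubled). Hence $H_1$ and $H_2$ need not be single cycles, and the lemma is genuinely invoked on non-cycles: each application of operation (ii) produces a connected Eulerian component with one more edge than vertices, i.e.\ with exactly one degree-$4$ vertex and a doubled edge, and that component can later play the role of $H_1$ in a further merge. Once the cycle assumption falls, two things break. First, your pigeonhole ``$2+2>3$'' no longer yields an edge lying in both $H_1$ \emph{and} $H_2$: at a common vertex $u$, $H_1$ might contribute a doubled edge $uw$ while $H_2$ uses the other two edges at $u$, so the two subgraphs share no edge there (the multiset union has size $4$, but the doubled contribution comes entirely from one side). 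Second, your connectivity argument rests on ``$H_i-e$ is a $u$--$u'$ path,'' which is false for non-cycles; worse, if the doubled pair you delete consists of two copies both coming from the same $H_i$ (e.g.\ a doubled edge joining two otherwise disjoint pieces of $H_i$), its removal can disconnect $H_3$.

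This is exactly why the paper uses \emph{both} common vertices $u$ and $v$: it partitions the edge multiset of $H_3$ into four edge-disjoint $(u,v)$-walks, observes (by pigeonhole at $u$, which only needs $\deg_{H_3}(u)\ge 4 > 3$) that some edge at $u$ appears twice with the two copies on two \emph{different} walks $P_1,P_2$, and deletes those two copies; the remaining walks $P_3,P_4$ keep $u$ and $v$ joined and every vertex of $P_1,P_2$ attached. Your second common vertex is not a cosmetic sanity check --- it is the load-bearing part of the connectivity argument. As a side remark, in the subcase where a genuinely shared edge does exist (one copy in each $H_i$), your construction can be repaired without the path claim, since no edge of a connected Eulerian multigraph is a bridge, so deleting one copy from each $H_i$ leaves each connected and still meeting at $u$; but the existence of such a shared edge is precisely what fails in general, so the proof as written does not establish the lemma in the generality the paper needs.
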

\begin{proof}
Let $u$ and $v$ be in both subgraphs. The edge set of $H_3$ can be partitioned into edge-disjoint $(u,v)$-walks $P_1,P_2, P_3, P_4$. Vertex $u$
must have two parallel edges which are on different paths, say $e_1\in P_1$ and $e_2\in P_2$. When we remove $e_1$ and $e_2$ then the graph stays
Eulerian. Moreover, it stays connected since $u$ and $v$ are still connected by $P_3$ and $P_4$ and, clearly, each vertex on $P_1$ and $P_2$
remains connected to either $u$ or $v$.$\hqed$
\end{proof}\smallskip

The following lemma, which applies to any graph, allows us to preprocess our graph by removing certain subgraphs.
\begin{lemma}\label{lem:2cut}
Assume that removing edges $u'u''$ and $v'v''$ from graph $G=(V,E)$ breaks it into two graphs $G'=(V',E')$ and $G''=(V'',E'')$
with $u',v' \in V'$, and $u''v''\in V''$ and such that:
\begin{enumerate}
\item  $u'v'\in E$ and $u''v''\notin E$.
\item  there is spanning Eulerian subgraph $T'$ of $G'$ with at most $4|V'|/3-2$ edges.
\item  there is a spanning Eulerian subgraph $T''$ of $G''\cup u''v''$ with at most $4|V''|/3-2$ edges.
\end{enumerate}
Then there is a spanning Eulerian subgraph $T$ of $G$ with at most $4|V|/3-2$ edges.
\end{lemma}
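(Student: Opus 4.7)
The plan is to build the spanning Eulerian subgraph $T$ of $G$ by gluing $T'$ and $T''$ together across the 2-cut $\{u'u'',v'v''\}$, paying at most two extra edges for the join. The construction splits into two cases depending on whether the ``virtual'' edge $u''v''$ is used by $T''$, since $T''$ is Eulerian in $G''\cup u''v''$ but $u''v''$ is not actually an edge of $G$.

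In Case 1, I assume $u''v'' \in T''$. My idea is to reroute this single edge of $T''$ through $G'$: remove $u''v''$ from $T''$ and add the three edges $u''u'$, $u'v'$, $v'v''$ (the two cut edges together with the edge $u'v'$ that is guaranteed to exist in $G'$ by hypothesis~1). Concretely, set
\[
  T \;=\; T' \,\cup\, (T'' \setminus \{u''v''\}) \,\cup\, \{u''u',\, u'v',\, v'v''\},
\]
viewed as a multiset of edges (so if $u'v'$ already lies in $T'$, we simply double it). A quick parity sweep shows $T$ is Eulerian: $T'$ is Eulerian to begin with; deleting $u''v''$ from $T''$ makes $u''$ and $v''$ odd; the added edges raise the degrees of $u'',u',v',v''$ by $1,2,2,1$ respectively, restoring even parity everywhere. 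Connectivity is automatic because $T'$ is connected and the path $u''\!-\!u'\!-\!v'\!-\!v''$ splices $T''\setminus\{u''v''\}$ back together and joins it to $T'$. The edge count is $|T'|+|T''|+2$.

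In Case 2, $u''v''\notin T''$, so $T''\subseteq G''$ is already a spanning Eulerian subgraph using only real edges of $G$. Here I would simply double the cut edge $u'u''$ to bridge the two sides, setting
\[
  T \;=\; T' \,\cup\, T'' \,\cup\, \{u'u'',\, u'u''\}.
\]
This adds $2$ to the degrees of $u'$ and $u''$, preserving evenness, and connects the two Eulerian pieces into one. Again the edge count is $|T'|+|T''|+2$.

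In either case, using hypotheses 2 and 3,
\[
  |E(T)| \;\le\; \Bigl(\tfrac{4}{3}|V'|-2\Bigr) + \Bigl(\tfrac{4}{3}|V''|-2\Bigr) + 2 \;=\; \tfrac{4}{3}|V|-2,
\]
as desired. The only subtlety I anticipate is the case split: one has to notice that the ``virtual'' edge $u''v''$ might or might not appear in $T''$, and that the hypothesis $u'v'\in E$ is precisely what is needed to reroute it through $G'$ in the first case, while the second case is strictly easier because doubling a cut edge suffices. Once the two configurations are identified, the verification of the Eulerian property and the edge count is routine. \qed
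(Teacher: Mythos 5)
Your construction follows the paper's proof almost exactly, but your case analysis is incomplete in a way that matters. In this paper a spanning Eulerian subgraph is a \emph{multi}-subgraph: multiple copies of edges are allowed (and indeed the Eulerian subgraphs produced elsewhere in the paper routinely contain doubled edges, e.g.\ doubled tree edges). Hypothesis~3 therefore allows $T''$ to use the virtual edge $u''v''$ \emph{twice}, and your two cases do not cover this. Your Case~1 recipe breaks down there: if you delete only one of the two copies of $u''v''$ before adding $u''u'$, $u'v'$, $v'v''$, the remaining copy of $u''v''$ survives in $T$, which is illegal since $u''v''\notin E$ by hypothesis~1; if instead you delete both copies, the degrees of $u''$ and $v''$ drop by $2$ and then rise by only $1$ from the added path, so they become odd and $T$ is no longer Eulerian. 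Either way the construction fails on this configuration, so the proof as written has a genuine gap.

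The fix is the third case in the paper's own proof: when $T''$ uses $u''v''$ twice, remove both copies and add both cut edges doubled, i.e.\ two copies of $u'u''$ and two copies of $v'v''$, together with $T'$. Parity is preserved at $u''$, $v''$ (each loses $2$ and gains $2$) and at $u'$, $v'$ (each gains $2$), connectivity holds since each side of $T''$ is attached to $T'$ through a doubled cut edge, and the edge count is $|T'|+|T''|-2+4=|T'|+|T''|+2$, which gives the same bound $\tfrac{4}{3}|V|-2$. Your other two cases (multiplicity $0$ and $1$) coincide with the paper's and are verified correctly.
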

\begin{proof} If $T''$ does not use edge $u''v''$
then we take edge $u'u''$ doubled and add subgraph $T'$. If $T''$ uses edge $u''v''$ once then we remove it and add edges $u'u''$, $v'v''$ and $u'v'$
and subgraph $T'$.  If $T''$ uses edge $u''v''$ twice then we remove both copies and add edge $u'u''$ doubled, $v'v''$ doubled, and subgraph $T'$.$\hqed$
\end{proof}\smallskip

\begin{figure}\label{fig:removedsubgraph}
\center
\begin{tikzpicture}[scale=1.1]

\draw (0,0) node(u3) [circle,draw] { };
\draw (1,0) node(u2) [circle,draw] { };
\draw (2,0) node(u1) [circle,draw] { };
\draw (3,0) node(u0) [circle,draw] { };
\draw (5,0) node(v0) [circle,draw] { };
\draw (6,0) node(v1) [circle,draw] { };
\draw (7,0) node(v2) [circle,draw] { };
\draw (8,0) node(v3) [circle,draw] { };
\draw (4,1) node(a) [circle,draw] { };
\draw (4,-1) node(b) [circle,draw] { };

\node at (0,-0.3) {$u_3$};
\node at (1,-0.3) {$u_2$};
\node at (2,-0.3) {$u_1$};
\node at (3,-0.3) { $u_0$};
\node at (5,-0.3) { $v_0$};
\node at (6,-0.3) { $v_1$};
\node at (7,-0.3) { $v_2$};
\node at (8,-0.3) {$v_3$};
\node at (4.3,1) { $a$};
\node at (4.3,-1) { $b$};

\draw (u3)--(u2)--(u1)--(u0);
\draw (v3)--(v2)--(v1)--(v0);
\draw (u0)--(a)--(v0)--(b)--(u0);
\draw (a)--(b);

\draw (-0.5,0.5) node(x) { };
\draw (-0.5,-0.5) node(y) { };
\draw (y)--(u3)--(x);
\draw (8.5,0.5) node(x) { };
\draw (8.5,-0.5) node(y) { };
\draw (y)--(v3)--(x);

\draw (node cs:name=u2) .. controls +(1 ,2) and +(-1,2) ..(node cs:name=v2);
\draw (node cs:name=u1) .. controls +(0.8 ,1.7) and +(-0.8,1.7) ..(node cs:name=v1);

\draw [line width=0.5pt, style=dotted] (4,0) ellipse (3.5cm and 1.7cm);
\node at (6.5,-0.8) {$G'$};
\node at (7.5,-0.8) {$G''$};

\end{tikzpicture}
\caption{In this $p$-rainbow example, $p=2$ and  $u'=u_2,u''=u_3,v'=v_2$, and $v''=v_3$.  }
\end{figure}
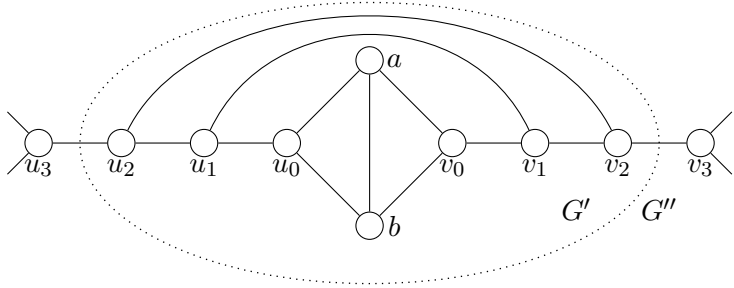

We use Lemma~\ref{lem:2cut} to remove all subgraphs of the form shown in Figure~1, which we call a $\emph{p-rainbow}$ subgraph. In such subgraphs
there is a path $u_0,u_1,\dots,u_{p+1}$ and path $v_0,v_1,\dots,v_{p+1}$ for some $p\ge 1$, and a 4-cycle $u_0,a,v_0,b$ with chord $ab$.
Furthermore, there are edges $u_iv_i$ for each $i\in \{1,2,\dots,p\}$ but there is no edge between $u_{p+1}$ and $v_{p+1}$. The figure shows a
$p$-rainbow for $p=2$. For general $p$, the $2$-cut of Lemma~\ref{lem:2cut} is given by $u'=u_p$, $u''=u_{p+1}$, $v'=v_p$, and $v''=v_{p+1}$. If
$G$ contains a $p$-rainbow $G'$, $p \geq 1$, then we remove $G'$ and add edge $u''v''$ to the remaining graph $G''$. Note that $G''$ is also a
simple bridgeless cubic graph.  We repeat this until there are no more $p$-rainbows in $G''$ for any $p \geq 1$. If the final remaining graph
$G''$ has at least 6 vertices, then assuming $G''$ has a spanning Eulerian subgraph with at most $4/3|V''|-2$ edges, we can apply Lemma~\ref{lem:2cut} repeatedly to
obtain such a subgraph of length at most $4n/3 -2$ for the original graph $G$. If the final remaining graph $G''$ has less than 6 vertices, then it
must have 4 vertices, since it is cubic, hence it forms a complete graph on 4 vertices.  In this case we take the Hamilton path from $u''$ to
$v''$ in $G''$ and match it with the Hamilton path of the $p$-rainbow that goes from $u_p$ to $v_p$ to obtain a Hamilton cycle of the graph $G''$
with the edge $u''v''$ replaced by the $p$-rainbow.  We can then apply Lemma \ref{lem:2cut} repeatedly to  obtain a  spanning Eulerian subgraph of $G$ with at most $4n/3 -2$ edges.


\subsubsection*{Proof of Theorem~\ref{thm:tsptour2ECcubic}.} By the above discussion, we assume that there are no $p$-rainbow subgraphs in $G$.
By Lemma \ref{lem:ccofmatchings} there exist 3-cut perfect matchings $M_1,\ldots,M_k$ and positive real numbers $\lambda_1,\ldots,\lambda_k$ such
that $\sum_{i=1}^k\lambda_i=1$ and $\frac{1}{3}\chi^{E} =\sum_{i=1}^k \lambda_i(\chi^{M_i})$. Let $\CC_1,\ldots,\CC_k$ be the cycle covers of $G$
corresponding to $M_1, M_2, ... M_k$. Since each $M_i$ is a 3-cut perfect matching, each $\CC_i$ intersects each 3-cut of $G$ in exactly 2 edges,
and hence contains neither a $3$-cycle nor a $5$-cycle with a chord.

If some $\CC_i$ has no more than $n/6$ cycles, then we are done, by the argument given earlier. Otherwise we manipulate each of
the cycle covers by operations (i) and (ii) below, which we will show to be well-defined. First operation (i) will be performed
as long as possible. Then operation (ii) will be performed as long as possible.
\begin{itemize}
\item[(i)] If two cycles $C_i$ and $C_j$ of the cycle cover intersect a (chordless) cycle $C$ of length $4$ in $G$ (the original graph) then combine them into a single cycle on $V(C_i)\cup V(C_j)$.

\end{itemize}
The details of operation (i) are as follows:  Assume that $u_1u_2$ and $v_1v_2$ are matching edges on $C$ and
$u_1v_1$ is an edge of $C_i$ and $u_2v_2$ is an edge of $C_j$. Deleting the latter two edges and inserting the former two yields
a single cycle of length equal to the sum of the lengths of $C_i$ and $C_j$. Notice that operation (i) always leads to cycles of
length at least 8. Hence after operation (i) is finished we still have a cycle cover. Operation (ii) below combines cycles into
Eulerian subgraphs and subsequently Eulerian subgraphs into larger Eulerian subgraphs, turning the cycle covers into Eulerian
subgraph covers. Both types of cover we call simply a {\em cover} and their elements (cycles and Eulerian subgraphs) we call
{\em components}.
\begin{itemize}
\item[(ii)] If two components $\gamma_i$ and $\gamma_j$ of the cycle cover or the Eulerian subgraph cover, each having at least 5 vertices,
intersect a (chordless) cycle $C$ of length $5$ in $G$ (the original graph) then combine them into a single Eulerian
subgraph where  the number of edges is 1 plus the number of edges of $\gamma_i$ and $\gamma_j$.

\end{itemize}
The details of operation (ii) are as follows. First note that for any cycle $C$, its vertex set $V(C)$ has the following
(trivial) property:
\\[2mm]
$\mathcal{P}$: Each $v\in V(C)$ has at least two other vertices $u,w \in V(C)$ such that $vu\in E$ and $vw\in E$.\\[2mm]
If two vertex sets both satisfy $\mathcal{P}$ then their union also satisfies $\mathcal{P}$.  Since the vertex set of each
component $\gamma$ constructed by operations (i) or (ii) is a result of taking unions of vertex sets of cycles, each such
$\gamma$ has property $\mathcal{P}$. In particular, since $G$ is cubic, this implies that the two components $\gamma_i$ and
$\gamma_j$ share 2 and 3 vertices with $C$, respectively (note that they cannot each share exactly 2 vertices, as this would
imply that a vertex of $C$ is not included in the cover). We first merge $\gamma_1$ and $C$ as in Lemma~\ref{lem:merge graphs}
and remove $2$ edges, and then merge the result with $\gamma_2$, again removing $2$ edges. Altogether we added the 5 edges of
$C$ and removed $4$ edges.

Operation (ii) leads to Eulerian subgraphs with at least 10 vertices. Thus, any Eulerian subgraph with at most 9 vertices is a
cycle. At the completion of operations (i) and (ii), let the resulting Eulerian subgraph covers be $\Gamma_1,\ldots,\Gamma_k$.

Given $\Gamma_1,\ldots,\Gamma_k$, we bound for each vertex its average contribution to the number of edges in the Eulerian subgraphs weighted by
the $\lambda_i$'s. We define the contribution of a vertex $v$ which in cover $\Gamma_i$ lies on an Eulerian subgraph with $\ell$
edges and $h$ vertices as $z_i(v) = \frac{\ell+2}{h}$; the $2$ in the numerator is added for the cost of the double edge to
connect the component to the others in final spanning Eulerian subgraph. Note that $\sum_{v \in V} z_i(v)$ is equal to the number of edges in $\Gamma_i$, plus 2.  The average contribution of $v$ over all covers is $z(v) = \sum_i \lambda_i
z_i(v)$. When summing this over all vertices $v$ we obtain the average length of the spanning Eulerian subgraphs plus 2. We will show that
$z(v)\leq 4/3$ $\forall v\in V$.

\begin{observation} \label{claim:longcycles} For any vertex $v$ and $i\in\{1,2,\dots,k\}$, the contribution $z_i(v)$ is
\begin{enumerate}
\item[(a)] at most $\frac{h+2}{h}$, where $h = \min\{ t, 10\} $ and $v$ is on a cycle of length $t$ in $\CC_i$ or after
operation (i).
\item[(b)] at most $13/10$ if operation (ii) was applied to some component containing $v$.
\end{enumerate}
\end{observation}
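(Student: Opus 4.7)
For part (a), the plan is a direct calculation. When $v$ lies on a cycle of length $t$, either in $\CC_i$ itself or after operation (i), the component containing $v$ is a single cycle with both $\ell=t$ edges and $t$ vertices, so $z_i(v)=(t+2)/t$. Since this expression is decreasing in $t$, the claim splits cleanly by the definition $h=\min\{t,10\}$: for $t\le 10$ we have $h=t$ and equality $(t+2)/t=(h+2)/h$; for $t>10$ we have $h=10$ and $(t+2)/t\le 12/10=(h+2)/h$.

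For part (b), my plan is to do careful bookkeeping on the vertex count $h$ and edge count $\ell$ of a component $\gamma$ that was built by $m\ge 1$ applications of operation (ii). Two facts about a single application of (ii) drive the whole argument. First, operation (ii) changes the edge count by exactly $+1$: the five edges of the auxiliary 5-cycle $C$ are inserted, and by two invocations of Lemma \ref{lem:merge graphs} a total of $2+2=4$ edges are removed. Second, operation (ii) introduces no new vertices, because as observed just after its definition $V(C)\subseteq V(\gamma_i)\cup V(\gamma_j)$: the two merged components meet $C$ in 2 and 3 vertices respectively, together covering all of $V(C)$. Operation (i) trivially preserves vertex-disjointness of the cover, and so does (ii), so the components always form a vertex partition of $V$.

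Expanding the merge history of $\gamma$ as a binary tree yields $m+1$ leaf cycles $\gamma^{(1)},\dots,\gamma^{(m+1)}$ that existed at the instant operation (ii) began. Because every component that participates in an application of (ii) must have at least 5 vertices, no 4-cycle ever takes part in a merge, and hence every leaf cycle has length $h_j\ge 5$. Combining vertex-disjointness with the $+1$-edge bookkeeping gives
\[
h=\sum_{j=1}^{m+1} h_j\ge 5(m+1), \qquad \ell=\sum_{j=1}^{m+1} h_j + m,
\]
so
\[
z_i(v)=\frac{\ell+2}{h}=1+\frac{m+2}{\sum_j h_j}\le 1+\frac{m+2}{5(m+1)}.
\]
The inequality $1+(m+2)/(5(m+1))\le 13/10$ rearranges to $10(m+2)\le 15(m+1)$, i.e.\ to $m\ge 1$, which holds by hypothesis.

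The only real (and modest) obstacle is the inductive bookkeeping: verifying that the "+1 edge, 0 new vertices" accounting remains valid even when operation (ii) is applied to components that are themselves already-merged Eulerian subgraphs, which may contain parallel edges or share edges with the auxiliary cycle $C$. This is handled by Lemma \ref{lem:merge graphs}, whose statement explicitly allows for parallel edges in the sum $H_3$, so the per-application change of $+5-4=+1$ edges persists at every level of the merge tree.
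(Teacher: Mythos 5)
Your part (b) is correct and is essentially the paper's argument, with the bookkeeping ($+5-4=+1$ edges per merge, no new vertices, leaves of the merge tree being cycles of length $\geq 5$) spelled out somewhat more carefully than in the paper.

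Part (a), however, has a genuine gap. You assert that when $v$ lies on a cycle of length $t$ in $\CC_i$ or after operation (i), ``the component containing $v$ is a single cycle with $\ell=t$ edges and $t$ vertices.'' That is false in general: operation (ii) runs \emph{after} operation (i), and it may absorb that cycle into a larger Eulerian subgraph, so the component of $v$ in the final cover $\Gamma_i$ need not be the cycle of length $t$ at all. One cannot patch this by falling back on your part (b): the bound $13/10$ is weaker than $(h+2)/h$ whenever $h>20/3$, i.e.\ for $h\in\{7,8,9,10\}$, and the main proof really uses (a) with these values (e.g.\ $z_i(v)\leq 9/7$, $10/8$, $11/9$, $12/10$). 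The paper's proof of (a) explicitly treats the non-cycle case: the final component $\gamma$ on $g$ vertices is a merge of cycles each of length at least $5$, one of which has length at least $h$, so the number of merged cycles is at most $1+(g-h)/5$ and the edge count is at most $g+(g-h)/5$, whence $z_i(v)\leq \bigl(g+(g-h)/5+2\bigr)/g\leq (h+2)/h$ using $h\leq 10$. Your own merge-tree machinery from part (b) could be adapted to give exactly this (take one leaf of size at least $h$, so $m\leq(g-h)/5$ and $z_i(v)=1+(m+2)/g\leq 1+2/h$ since $(h-10)(g-h)\leq 0$), but as written your part (a) simply does not address the case where operation (ii) touches the component of $v$.
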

\noindent\textit{Proof (Observation~\ref{claim:longcycles})}. Assume that $v$ is on a Eulerian subgraph $\gamma$ in $\Gamma_i$ of $g$ vertices.
First we prove $(b)$. If operation (ii) was applied to some component containing $v$, then vertex $v$ was on a cycle of length at least $5$ after
operation (i). Each application of (ii) adds at least 5 vertices to the component of $v$. Hence, the number of times that (ii) was applied to the
component of $v$ is at most $g/5-1$. Since each application adds exactly one edge, the number of edges in $\gamma$ is at most $g+g/5-1$. Hence,
\[z_i(v) \leq \frac{g+g/5+1}{g}=\frac{12}{10}+\frac{1}{g}\le \frac{13}{10}.\]
We use a similar argument to prove ($a$). Clearly, $g\ge h$. If $\gamma$ is a cycle then the contribution of $v$ in $\Gamma_i$
is $(g+2)/g\le (h+2)/h$ and ($a$) is true. If $\gamma$ is not a cycle then this  Eulerian subgraph  was composed by operation
(ii) applied to cycles, each of length at least 5 and one of these had length at least $h$. Hence, the number of these cycles is
at most $1+(g-h)/5$. Since every application of operation (ii) adds one edge extra, the  number of edges in $\gamma$ is at most
$g+(g-h)/5$. Hence, since $h\le 10$,
\[ z_i(v) \leq \frac{g+(g-h)/5+2}{g}\le \frac{g+(g-h)/(h/2)+2}{g}=\frac{h+2}{h}.\]\qed

Note the subtleties in Observation~\ref{claim:longcycles}: If $v$ is on a cycle of length $t$ in $\CC_i$ or after operation (i), and  $t \leq 10$,
then ($a$) says that $z_i(v)$ is at most $(t+2)/t$.  If $t > 10$, then ($a$) says that its contribution is at most $12/10$. And finally, if $t$ is
$5$ or $6$ and we know that operation (ii) was applied to some component containing $v$, then ($b$) allows us to improve the upper bound on
$z_i(v)$ to $13/10$ (for other values of $t$, ($b$) does not give an improvement).

From now on we fix any vertex $v$. Suppose that there is no $\ell$ such that $v$ is on a 4-cycle or a 5-cycle of $\Gamma_\ell$. Then using
Observation~\ref{claim:longcycles}, we have $z_i(v) \le \max\{ 8/6,13/10\} = 4/3$ for every cover $\Gamma_i$, and thus $z(v)\leq 4/3$ and we are
done.

Now suppose there exists an $\ell$ such that $v$ is on a 4-cycle $C$ of $\Gamma_\ell$. Then $C$ must be present in $\CC_{\ell}$ as well. First
assume that $C$ is chordless in $G$. Then all four edges adjacent to $C$ are in the set $M_\ell$.
\begin{observation}
\label{obs:3path} For any pair of vertices on a chordless cycle of $G$ that appears in any $\CC_i$, any path between the two
that does not intersect the cycle has length at least $3$.
\end{observation}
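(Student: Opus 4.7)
The plan is to proceed by contradiction: suppose $u,v$ lie on a chordless cycle $C$ of $\CC_i$ and that there is a $u$-$v$ path $P$ in $G$ of length at most $2$ whose interior vertices lie outside $C$. I want to rule out length $1$ and length $2$ separately, using chordlessness for the first and the matching structure behind $\CC_i$ for the second.

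First I would handle the length-$1$ case. A path of length $1$ from $u$ to $v$ whose interior does not meet $C$ is simply an edge $uv \in E(G)$ that does not lie on $C$. Since both $u$ and $v$ are vertices of $C$, this edge is by definition a chord of $C$, contradicting the assumption that $C$ is chordless. So no such $P$ of length $1$ exists.

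Next I would treat the length-$2$ case, which is the real content. Suppose $P = u\text{-}w\text{-}v$ with $w \notin V(C)$. Since $G$ is cubic, $u$ has exactly three incident edges, and since $u$ lies on the cycle $C \in \CC_i$, exactly two of these three edges (the two consecutive edges of $C$ at $u$) belong to $\CC_i$. The remaining edge at $u$ is therefore the unique edge of $M_i$ incident to $u$, and because $w \notin V(C)$, this remaining edge is forced to be $uw$; hence $uw \in M_i$. The identical argument applied to $v$ gives $vw \in M_i$. But now both $uw$ and $vw$ are edges of $M_i$ meeting at the common vertex $w$, contradicting the fact that $M_i$ is a perfect matching.

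The main obstacle, such as it is, is just being careful about the interpretation of ``does not intersect the cycle'' (interior vertices off $C$, endpoints on $C$) and about using that each vertex on a cycle in $\CC_i$ has a uniquely determined matching edge in $M_i$ (its unique non-cycle edge). Notice that the argument does not actually need the $3$-cut strengthening of $M_i$ from Lemma~\ref{lem:ccofmatchings}: plain Petersen-style cubic structure, combined with chordlessness and the matching property of $M_i$, is enough. Combining the two cases, every $u$-$v$ path avoiding the interior of $C$ has length at least $3$.
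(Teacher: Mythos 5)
Your proof is correct and is exactly the argument the paper intends: the observation is stated without proof, but the sentence immediately preceding it (``all four edges adjacent to $C$ are in the set $M_\ell$'') points to precisely your length-$2$ case, namely that every edge leaving a chordless cycle of $\CC_i$ is a matching edge, so a length-$2$ detour through an outside vertex $w$ would force two edges of $M_i$ to meet at $w$. The length-$1$ case via chordlessness is likewise the intended (and only) reading, so there is nothing to add.
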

We partition the set $\CC_1,\ldots,\CC_k$  according to the way the corresponding $M_i$'s intersect the cycle $C$. Define sets
$X_0, X_1, X_2$ where $X_j = \{ i \mid |C \cap M_i| = j \}$ for $j=0,1,2$. Let $x_t=\sum_{i\in X_t}\lambda_i$, $t=0,1,2$.
Clearly $x_0+x_1 +x_2=1$. Since each of the four edges adjacent to $C$ receives total weight $1/3$ in the matchings, we have
that $4x_0+2x_1=4/3 \Rightarrow x_0=1/3-x_1/2$. Since each of the edges of $C$ receives total weight $1/3$ in the matchings,
$x_1+2x_2=4/3 \Rightarrow x_2=2/3-x_1/2$.

Clearly, for any $i\in X_0$, $v$ lies on cycle $C$  in $\CC_i$, and thus by Observation~\ref{claim:longcycles}($a$), $z_i(v) \leq 6/4$. By
Observation~\ref{obs:3path}, for any $i \in X_1$, $v$ lies on a cycle of length at least 6 in $\CC_i$, and thus by Observation
\ref{claim:longcycles}($a$), $z_i(v) \leq 8/6$. For any $i\in X_2$, if $C$ is intersected by one cycle in $\CC_i$, then this cycle has length at
least 8 by Observation~\ref{obs:3path}. If for $i\in X_2$, $C$ is intersected by two cycles of length at least 4 each, then, after performing
operation (i), $v$ will be on a cycle of length at least 8. Thus using Observation~\ref{claim:longcycles}($a$) one more time, we obtain
\begin{eqnarray*}
z(v)&\le &x_06/4+x_18/6+x_210/8\\
&=& (1/3-x_1/2)6/4+x_18/6+(2/3-x_1/2)10/8\\
&=& 4/3+x_1(8/6-6/8-10/16)=4/3-x_1/24\le 4/3.
 \end{eqnarray*}
We prove now that $z(v)\le 4/3$ also if $C$ is a 4-cycle with a chord. Let us call the vertices on the cycle $u_0,a,v_0,b$, let $ab$ be the chord,
and $v$ is any of the four vertices. If $u_0v_0 \in E$, then $G=K_4$ (the complete graph on 4 vertices), contradicting the assumption that $n \geq
6$.  Thus edges $u_0u_1$ and $v_0v_1$ exist, with $u_1,v_1\notin C$. Notice that $u_1\neq v_1$ since otherwise $G$ would contain a bridge,
contradicting 2-connectedness. Let $C'$ be the cycle containing $v$ in some cycle cover $\CC_i$. If $C'$ does not contain edge $u_0u_1$ then
$C'=C$. If, on the other hand, $u_0u_1\in C'$ then also $v_0v_1\in C'$ and $ab\in C'$.  Note that $u_1v_1 \notin E$ since otherwise we have a
$p$-rainbow subgraph as in Figure~1, and we are assuming that we do not have any such subgraphs. Consequently, $C'$ cannot have length exactly 6.
It also cannot have length 7 since then a 3-cut with 3 matching edges would occur. Therefore, any cycle containing $u_0u_1$ has length at least 8.
Applying Observation~\ref{claim:longcycles}($a$) twice we conclude that $z(v)\leq 1/3 \cdot 6/4 + 2/3 \cdot 10/8 = 4/3$.

Now assume there exists a (chordless) 5-cycle $C$ containing $v$ in some $\Gamma_\ell$. Note that we can assume that no $w \in C$ is on a 4-cycle
of $G$, otherwise operation (i) would have been applied and the component of $v$ in $\Gamma_\ell$ would have size larger than 5. Note further that
$C$ is present in $\CC_\ell$ as well. The proof for this case is rather similar to the case for the chordless 4-cycle. Let $X_j$ be the set $\{
i\mid |C \cap M_i| = j\}$, for $j=0,1,2$. Let $x_t=\sum_{i\in X_t}\lambda_i$, $t=0,1,2$. Again, we have $x_0+x_1+x_2=1$.  Clearly, for any $i\in
X_0$, $v$ lies on $C$ in $\CC_i$ and for $i \in X_1$ $v$ lies on a cycle of length at least $7$ by Observation~\ref{obs:3path}. Hence, by
Observation~\ref{claim:longcycles}($a$) we have $z_i(v)\le 7/5$ for $i\in X_0$ and $z_i(v)\le 9/7$ for $i\in X_1$. For any $i \in X_2$ there are
two possibilities: Either $C$ is intersected by one cycle in $\CC_i$, which, by Observation~\ref{obs:3path}, has length at least 9, or $C$ is
intersected in $\CC_i$ by two cycles, say $C_1$ and $C_2$. In the first case we have $z_i(v)\le 11/9$ by Observation~\ref{claim:longcycles}($a$).
In the second case, as argued before, we can assume that no $w\in C$ is on a 4-cycle of $G$. Hence, $C_1$ and $C_2$ each have at least 5 vertices
and operation (ii) will be applied, unless $C_1$ and $C_2$ end up in one large cycle by operation $(i)$. In the first case we apply
Observation~\ref{claim:longcycles}($b$) and get $z_i(v)\le 13/10$, and in the second case we apply Observation~\ref{claim:longcycles}($a$):
$z_i(v)\le 12/10$. Hence, for any $i \in X_2$ we have $z_i(v)\le \max\{ 11/9,12/10,13/10\}=13/10$.
  \begin{eqnarray*}
  z(v)&\le & x_0 7/5+x_1 9/7 +x_2 13/10\\
&\le & x_0 7/5+x_1 13/10 +x_2 13/10\\
   &=& x_07/5+(1-x_0)13/10 = 13/10 + x_0 1/10 \\
   &\le& 13/10+ 1/30=4/3.
  \end{eqnarray*}\hqed

As previously mentioned, Barahona \cite{Bar} provides a polynomial-time algorithm which finds a set of at most $7n/2 -1$ perfect matchings such
that $\frac{1}{3} \chi^E$ can be expressed as a convex combination of the incidence vectors of these matchings. This algorithm runs in $O(n^6)$
time.  As shown in the proof of Lemma \ref{lem:ccofmatchings}, these matchings will automatically be 3-cut perfect matchings. Once we have this
set of perfect matchings then applying operations (i) and (ii) on the corresponding cycle covers gives at least one tour of length at most
$4n/3-2$ according to the above theorem.  As any tour has length at least $n$ for graph-TSP, we have the following approximation result:
\begin{corollary}\label{4over3}
For graph-TSP on simple bridgeless cubic graphs there exist a polynomial-time $4/3$ approximation algorithm.
\end{corollary}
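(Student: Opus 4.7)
The plan is to package Theorem~\ref{thm:tsptour2ECcubic} together with the machinery of Lemma~\ref{lem:ccofmatchings} and Barahona's polyhedral algorithm into an explicit polynomial-time procedure. The pseudocode I have in mind is: (1) compute $x^* = \tfrac{1}{3}\chi^E$, which by the argument in the proof of Lemma~\ref{lem:ccofmatchings} lies in $P^M(G)$; (2) invoke Barahona's $O(n^6)$ algorithm \cite{Bar} on $x^*$ to obtain perfect matchings $M_1,\ldots,M_k$ (with $k \leq 7n/2 - 1$) and coefficients $\lambda_1,\ldots,\lambda_k$ expressing $x^*$ as a convex combination of their incidence vectors; (3) for each $M_i$, form the cycle cover $\CC_i = E \setminus M_i$, apply operations (i) and (ii) from the proof of Theorem~\ref{thm:tsptour2ECcubic} to build the Eulerian subgraph cover $\Gamma_i$, add a doubled spanning tree in the contracted graph, and return the best of the $k$ resulting spanning Eulerian subgraphs (equivalently, the best short-cut TSP tour).

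Correctness rests on two facts that are essentially already in hand. First, by the argument in Lemma~\ref{lem:ccofmatchings}, every $M_i$ returned by Barahona's algorithm is automatically a $3$-cut perfect matching, so the collection $\{\CC_i\}$ satisfies the hypothesis of Theorem~\ref{thm:tsptour2ECcubic}. Second, the averaging analysis in the proof of Theorem~\ref{thm:tsptour2ECcubic} shows that
\[
\sum_{i=1}^{k} \lambda_i \bigl(|E(\Gamma_i)| + 2\bigr) \;=\; \sum_{v \in V} z(v) \;\leq\; \tfrac{4}{3} n,
\]
so at least one index $i^*$ satisfies $|E(\Gamma_{i^*})| \leq \tfrac{4}{3}n - 2$. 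Shortcutting the corresponding Eulerian tour yields a Hamilton cycle in the metric completion of $G$ of cost at most $\tfrac{4}{3}n - 2$. Since any graph-TSP tour on $n$ vertices has cost at least $n$, and since preprocessing to remove $p$-rainbows (via Lemma~\ref{lem:2cut}) only helps, the approximation ratio is $\tfrac{4}{3}$.

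For the running time, I would argue that each of the listed steps after Barahona's call is polynomial: operation (i) can be implemented by scanning all $4$-cycles in $G$, operation (ii) by scanning all $5$-cycles (there are $O(n)$ of each in a cubic graph), and contracting plus doubling a spanning tree is linear. Thus step (2) dominates with $O(n^6)$, and the overall algorithm runs in polynomial time. The main obstacle I anticipate is purely expository: making it explicit that the short-cutting step is legal in the metric completion (standard) and that the $p$-rainbow preprocessing from Lemma~\ref{lem:2cut} composes correctly with the averaging analysis over the $k$ covers, so that the $4n/3 - 2$ bound transfers from the reduced graph back to the original $G$. Since Lemma~\ref{lem:2cut} is itself constructive and additive in the vertex count, this composition is routine, and no new analytical work beyond what Theorem~\ref{thm:tsptour2ECcubic} already provides is needed.
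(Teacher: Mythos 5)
Your proposal is correct and follows essentially the same route as the paper: invoke Barahona's $O(n^6)$ algorithm to express $\frac{1}{3}\chi^E$ as a convex combination of (automatically $3$-cut) perfect matchings, run operations (i) and (ii) on the corresponding cycle covers, keep the best resulting spanning Eulerian subgraph, and compare against the trivial lower bound of $n$. The extra remarks on the $p$-rainbow preprocessing and on implementing the operations in polynomial time are consistent with what the paper leaves implicit.
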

\noindent As $n$ is a lower bound on the value of SER for graph-TSP it also follows that, as an upper bound, Conjecture
\ref{mainconj} is true for this class of problems, i.e.,
\begin{corollary}
For graph-TSP on simple bridgeless cubic graphs the integrality gap for SER is at most $4/3$.
\end{corollary}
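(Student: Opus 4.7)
The plan is to observe that this corollary is essentially a one-line consequence of Corollary~\ref{4over3} (equivalently, Theorem~\ref{thm:tsptour2ECcubic}) combined with the standard $n$-lower bound on SER already pointed out in the introduction. So the proof is not so much a construction as a ratio computation.

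First I would establish the SER lower bound. For graph-TSP on a cubic graph $G=(V,E)$ with $n$ vertices, the objective is $\min \sum_{e} x_e$, and every feasible SER solution $x$ satisfies the degree equations $x(\delta(v)) = 2$ for all $v \in V$. Summing these equations over all vertices double-counts each edge variable, giving
$$2 \sum_{e \in E} x_e \;=\; \sum_{v \in V} x(\delta(v)) \;=\; 2n,$$
so $\sum_{e \in E} x_e = n$ for every feasible solution. Consequently the optimal SER value is exactly $n$ (and in particular at least $n$).

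Next I would apply Theorem~\ref{thm:tsptour2ECcubic}: for every simple bridgeless cubic graph on $n \geq 6$ vertices, the optimal graph-TSP tour has length at most $\tfrac{4}{3}n - 2$. Dividing by the SER lower bound gives
$$\frac{\opt(\text{TSP})}{\opt(\text{SER})} \;\leq\; \frac{\frac{4}{3}n - 2}{n} \;=\; \frac{4}{3} - \frac{2}{n} \;<\; \frac{4}{3}.$$
For completeness I would dispose of the only case excluded by the hypothesis $n \geq 6$: the unique simple bridgeless cubic graph on fewer than six vertices is $K_4$ (on four vertices), for which the optimal Hamilton tour has length $4$ and the SER optimum equals $n=4$ as well, yielding ratio $1 \leq 4/3$. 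Taking the worst case over all simple bridgeless cubic instances then yields the claimed bound of $4/3$ on the integrality gap.

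There is essentially no obstacle here, since all the heavy lifting has been done in Theorem~\ref{thm:tsptour2ECcubic}; the only thing to be careful about is that the integrality gap bound holds uniformly over all $n$, which requires the small $n = 4$ check above because the $\frac{4}{3}n - 2$ bound is only proved for $n \geq 6$.
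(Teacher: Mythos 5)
Your proposal is correct and matches the paper's (very brief) argument exactly: the SER optimum equals $n$ by summing the degree constraints, and dividing the $\frac{4}{3}n-2$ tour bound of Theorem~\ref{thm:tsptour2ECcubic} by $n$ gives the claim. The extra $K_4$ check is a reasonable bit of added care that the paper omits, but it changes nothing substantive.
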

\noindent We remark that the largest ratio we found so far for $\alpha(TSP)$ on simple bridgeless cubic examples is $7/6$  (see Section~\ref{sec:epilogue}).

\section{Subcubic bridgeless graphs: A comment on the M\"omke-Svensson $4n/3$-approximation}\label{sec:approach3}

As discussed in Section \ref{sec:intro}, M\"omke and Svensson \cite{momke} present a randomized algorithm (obviously derandomizable) which is $1.461$-approximate for graph-TSP for general graphs, and gives a bound of $(4n/3 -2/3)$ on the graph-TSP tour for all subcubic bridgeless graphs. Their method is different from all previous methods in that it is based on detecting a set of removable edges $R$ of which some are paired, in the sense that each of them can be removed but not both of them can be removed.

We describe here how detecting the set $R$ and the pairing works. We will then see that this works out particularly nicely for cubic graphs and allows us to derandomize the algorithm, giving a considerable reduction in running time.

The search for $R$ starts by finding a depth first search tree $T$ of the graph $G=(V,E)$ using any vertex $r$ as the root. For the moment, consider the edges of $T$ to be directed away from $r$.  The set of remaining edges is denoted by $B$. They are back edges which are directed towards the root $r$.  By the properties of depth first search trees, each back edge $b=xy$ forms a unique directed cycle together with the path from $y$ to $x$ on $T$.
Let $t_b$ be the unique edge in $T$ on that cycle whose tail is incident with the head $y$ of $b$, and let $T_B = \{t_b: b \in B\}$. Choose the set of removable edges to be
$R=B\cup T_B$.

We make each arc $e\in T_B$ part of a pair in $P$: its partner is chosen arbitrarily from amongst the back edges $b\in B$ such that $e=t_b$. If we think of everything undirected again then notice that, given a pair
$\{ b \in B, t_b \in T_B\}$ in $P$, we have that $(T\setminus t_b) \cup b$ forms a different spanning tree of $G$. In fact, essentially M\"omke and Svensson show indirectly that any number of the $t_b$ and $b$ partnered edge pairs can be swapped, and the result will still be another spanning tree.

\begin{lemma}{\bf \cite{momke}}\label{lem:stillspanningtree}
Let $T_J$ be a subset of the edges in $T_B$, with corresponding partner back edges $J\subseteq B$. Let $T^*$ be the result of taking spanning tree $T$, removing the edges of $T_J$, and adding the edges of $J$.  Then $T^*$ is also a spanning tree of $G$. \qed
\end{lemma}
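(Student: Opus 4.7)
I would first note that the pairing restricted to $T_J$ is a bijection from $T_J$ to $J$, so $|T^*|=(n-1)-|T_J|+|J|=n-1$. Hence it suffices to show that $T^*$ is acyclic.

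Suppose for contradiction that $T^*$ contains a simple cycle $C$. Since $T\setminus T_J$ is a sub-forest of $T$, the cycle $C$ must use at least one back edge of $J$; denote the back edges of $C$ by $b_{i_1},\dots,b_{i_k}$ with $k\ge 1$. Using the standard fact that the fundamental cycles $\{C_T(b):b\in B\}$ form an $\mathbb{F}_2$-basis of the cycle space of $G$, together with the observation that the only non-tree edges of $C$ are $b_{i_1},\dots,b_{i_k}$, we obtain
\[
 C \;=\; C_T(b_{i_1}) \,\triangle\, C_T(b_{i_2}) \,\triangle\, \cdots \,\triangle\, C_T(b_{i_k}).
\]
Since $C\subseteq T^*$ uses no edge of $T_J$, every $e\in T_J$ must appear in an even number of the fundamental cycles on the right-hand side. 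Applied to $e=t_{b_{i_j}}$, which lies in $C_T(b_{i_j})$ by construction, this yields the existence of some $j'\ne j$ with $t_{b_{i_j}}\in C_T(b_{i_{j'}})$; set $f(j):=j'$.

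The key step, which I expect to be the main obstacle, is to show $\mathrm{depth}(y_{i_{f(j)}})<\mathrm{depth}(y_{i_j})$. Writing $t_{b_{i_j}}=(y_{i_j},w_{i_j})$, the fact that this tree edge lies on the monotone-descending tree path from $y_{i_{j'}}$ down to $x_{i_{j'}}$ forces $y_{i_{j'}}$ to be a (not necessarily strict) ancestor of $y_{i_j}$. If equality $y_{i_{j'}}=y_{i_j}$ held, then $t_{b_{i_j}}$ would have to be the very first edge of that path (its tail is $y_{i_{j'}}$, and a tree path does not revisit vertices), but by definition the first edge is $t_{b_{i_{j'}}}$; the bijectivity of the pairing then yields $b_{i_j}=b_{i_{j'}}$ and hence $j=j'$, a contradiction. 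So the inequality is strict. But $f$ maps the finite set $\{i_1,\dots,i_k\}$ into itself, so iterating $f$ must eventually revisit some index, contradicting the strict depth decrease. Hence $T^*$ contains no cycle, and together with $|T^*|=n-1$ this makes it a spanning tree. An alternative route that avoids the depth analysis is to invoke the basis-exchange theorem for the graphic matroid: the pairing $P$ restricted to $T_J\times J$ is a perfect matching in the bipartite fundamental-cycle exchange graph, which immediately yields that $T^*$ is a basis, i.e., a spanning tree.
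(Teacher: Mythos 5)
Your main argument is correct and complete. Note first that the paper itself does not prove this lemma: it is stated with a citation to M\"omke and Svensson and no argument is given (the text only remarks that they ``show indirectly'' that the partnered pairs can be swapped), so there is no in-paper proof to compare against. Your proof is a valid self-contained one: the edge count via the bijection $T_J\leftrightarrow J$, the reduction to acyclicity, the expression of a hypothetical cycle $C\subseteq T^*$ in the fundamental-cycle basis determined by its non-tree edges, and the parity argument forcing each $t_{b_{i_j}}$ to lie in some other $C_T(b_{i_{j'}})$ are all sound. The key step is also handled correctly: since the tree path of $C_T(b_{i_{j'}})$ descends monotonically from $y_{i_{j'}}$, containing the edge $t_{b_{i_j}}$ (whose tail is $y_{i_j}$) forces $y_{i_{j'}}$ to be a weak ancestor of $y_{i_j}$, and equality would make $t_{b_{i_j}}$ the first edge of that path, i.e.\ $t_{b_{i_j}}=t_{b_{i_{j'}}}$, contradicting that each edge of $T_B$ has a unique partner; the resulting strict depth decrease under iteration of $f$ gives the contradiction. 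One caveat: the closing ``alternative route'' is not correct as stated. The mere existence of a perfect matching in the fundamental-circuit exchange bipartite graph between $T_J$ and $J$ does \emph{not} in general imply that the multiple symmetric exchange produces a basis; the standard sufficient condition is that the perfect matching be \emph{unique}. Uniqueness does hold here (your depth argument shows the bipartite adjacency matrix is triangular under a suitable ordering), but as written the aside invokes a false general principle, so either drop it or add the uniqueness observation. This does not affect the validity of your primary proof.
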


\begin{lemma}{\bf \cite{momke}}\label{lem:connected}
Let $K$ be a subset of $E$ such that for every pair $\{ b\in B,t_b \in T_B\}$ in $P$ at most one of $t_b$ and its partner back edge $b$ is in $K$. Then $G$ with the edges in $K \cap R$ removed from $E$ is connected. \qed
\end{lemma}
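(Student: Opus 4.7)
The plan is to exhibit an explicit spanning tree of the graph $G'$ obtained from $G$ by deleting $K\cap R$, thereby proving connectivity directly. The construction is driven by Lemma~\ref{lem:stillspanningtree}, applied to precisely the tree-edges in $T_B$ that happen to lie in $K$.

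First I would set $T_J := T_B \cap K$, and let $J \subseteq B$ be the collection of back-edge partners of the edges in $T_J$ under the pairing $P$. The hypothesis that no pair contains two elements of $K$ immediately gives $J \cap K = \emptyset$; in particular no edge of $J$ is deleted when passing to $G'$. Applying Lemma~\ref{lem:stillspanningtree} with $T_J$ and its partner set $J$ yields that $T^* := (T \setminus T_J) \cup J$ is a spanning tree of $G$.

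It then suffices to verify that every edge of $T^*$ survives in $G'$, i.e., that $T^* \cap (K \cap R) = \emptyset$. For edges in $J$ this was already observed. For an edge $e \in T \setminus T_J$, either $e \notin R$, in which case it is not deleted at all, or $e \in R$; in the latter case, since $e \in T$ and $T$ is disjoint from $B$, we must have $e \in T_B$, so $e \in T_B \setminus T_J = T_B \setminus (T_B \cap K)$, and therefore $e \notin K$. Either way $e \notin K \cap R$, so $T^* \subseteq E(G')$, and hence $G'$ is connected.

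The only delicate point in this plan is recognizing that Lemma~\ref{lem:stillspanningtree} permits swapping \emph{all} pairs $\{b,t_b\}$ with $t_b \in T_J$ simultaneously, so that one does not have to worry about the order of swaps or whether partial swaps produce sensible intermediate objects; the entire argument reduces to one application of that lemma together with a bookkeeping check that the remaining edges of $T^*$ lie outside $K$.
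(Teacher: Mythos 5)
Your argument is correct and is precisely the intended derivation: the paper states this lemma without proof (citing M\"omke and Svensson), but Lemma~\ref{lem:stillspanningtree} is set up immediately beforehand exactly so that one can swap the pairs $T_J = T_B\cap K$ for their partners $J$ and observe that the resulting spanning tree $T^{*}$ avoids $K\cap R$. Your bookkeeping is complete and sound --- the edges of $J$ escape $K$ by the pairing hypothesis, and any edge of $T\setminus T_J$ lying in $R=B\cup T_B$ must lie in $T_B\setminus K$ since $T$ and $B$ are disjoint --- so $T^{*}$ survives the deletion and witnesses connectivity.
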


\subsection{Cubic bridgeless multigraphs.}\label{sec:3.1}

Now we turn to the case in which $G$ is a cubic bridgeless multigraph. Notice that each vertex of the depth first search tree $T$ has 0 or 1 back edges directed into it, except for the root $r$, which has exactly 2 back edges directed into it.  This means that every back edge in $B$ is a partner edge for some edge in $T_B$, except for one back edge at the root, and thus $\left|{T_B} \right| = \left| {B} \right| - 1$. So for $G$ we have

 \begin{eqnarray}\label{countpairset}
 \left|{R} \right| = 2\left|{B} \right| -1= 2\left(\left| {E} \right| - (n - 1)\right) - 1 =2(3n/2 - (n - 1)) - 1 = n + 1.
 \end{eqnarray}

We are now ready to establish a bound of $4n/3 - 2/3$ on the length of an optimal TSP-tour in any bridgeless cubic multigraph on $n$ vertices. We do so in a constructive way.

We start by assigning weights to the edges in $G$ based on $R$:
\begin{eqnarray}\label{costs}
c_e = \left \{\begin{array}{rl}
                   -1 & \mbox{ if }e \in R, \\
                    1 & \mbox{ if }e \in E\setminus R.
                   \end{array}
           \right.
\end{eqnarray}
\noindent Then using (\ref{countpairset}), we have
\begin{eqnarray}\label{costE}
 c(E) = -\left| {R} \right| + (\left| {E} \right| -\left| {R} \right|) = 3n/2 - 2(n + 1) = -(n/2 + 2).
\end{eqnarray}

Let $M^*$ be a minimum weight perfect matching for $G$ w.r.t. edge weights $c$.  We use the following theorem, due to Pulleyblank and Naddef \cite{NP} (note that it is not clear in their paper if this theorem is stated for multigraphs or simple graphs, but it trivially follows from the proof they provide that it is also true for multigraphs):
\begin{theorem}  (Naddef and Pulleyblank, Theorem 4 \cite{NP}).\label{thm:NadPul}
  Let $G = (V,E)$ be a $k$-regular $(k-1)$-edge connected multigraph for which $\left| {V} \right|$  is even. Then
  \begin{itemize}
\item[$(i)$] Every edge of G belongs to a perfect matching,
\item[$(ii)$] For any real vector $w = (w_j: j \in E)$ of edge weights, there is perfect matching $M$ of $G$ such that $w(M) \leq w(E)/k$. \qed
\end{itemize}
\end{theorem}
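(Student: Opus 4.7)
The plan is to deduce both (i) and (ii) from Edmonds' description of the perfect matching polytope $P^M(G)$, by showing that the uniform vector $x^* = \tfrac{1}{k}\chi^E$ lies in $P^M(G)$. The overall structure mirrors Lemma~\ref{lem:ccofmatchings} above, where the analogous fact was established for $\tfrac{1}{3}\chi^E$ on cubic bridgeless graphs.

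First I would verify that $x^*$ satisfies Edmonds' inequalities. The degree constraints $x^*(\delta(v))=1$ follow immediately from $k$-regularity, and the box constraints $0 \le 1/k \le 1$ are trivial. The only nontrivial check is the odd-cut constraint: for every $S \subsetneq V$ with $|S|$ odd I need $|\delta(S)| \ge k$. The hypothesis of $(k-1)$-edge connectivity gives $|\delta(S)| \ge k-1$, and a short parity argument closes the gap: summing degrees over $S$ yields $2|E(G[S])| + |\delta(S)| = k|S|$, so $|\delta(S)| \equiv k|S| \equiv k \pmod{2}$ since $|S|$ is odd; combined with $|\delta(S)| \ge k-1$ this upgrades to $|\delta(S)| \ge k$. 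Hence $x^* \in P^M(G)$.

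Having placed $x^*$ in $P^M(G)$, I would write $\tfrac{1}{k}\chi^E = \sum_{i=1}^{\ell} \lambda_i \chi^{M_i}$ as a convex combination of incidence vectors of perfect matchings $M_i$, with $\lambda_i > 0$ and $\sum_i \lambda_i = 1$. Part (i) is then immediate: for every edge $e$, $\tfrac{1}{k} = x^*_e = \sum_{i\,:\,e \in M_i} \lambda_i > 0$, so at least one $M_i$ contains $e$. For part (ii), taking the inner product with any weight vector $w$ gives
\[
\sum_{i=1}^{\ell} \lambda_i\, w(M_i) \;=\; w \cdot x^* \;=\; \frac{w(E)}{k},
\]
which is a weighted average of the values $w(M_i)$; therefore at least one matching $M_i$ in the decomposition satisfies $w(M_i) \le w(E)/k$.

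The main obstacle, and really the only subtle step, is the parity-plus-connectivity argument that lifts $|\delta(S)| \ge k-1$ to $|\delta(S)| \ge k$ whenever $|S|$ is odd. Once this is settled, the rest of the theorem is a direct averaging argument on the convex decomposition, entirely parallel to the reasoning used for Lemma~\ref{lem:ccofmatchings}.
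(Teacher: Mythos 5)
Your proof is correct. The paper does not actually prove this theorem---it is quoted from Naddef and Pulleyblank with only a remark about the multigraph case---but your argument is the standard one and is exactly the technique the paper itself uses in Lemma~\ref{lem:ccofmatchings} (which is the $k=3$ instance): place $\frac{1}{k}\chi^{E}$ in Edmonds' perfect matching polytope and average over the resulting convex combination. Your parity step (that $|\delta(S)|\equiv k|S|\equiv k \pmod 2$ for odd $|S|$, which upgrades the bound $|\delta(S)|\ge k-1$ from $(k-1)$-edge connectivity to $|\delta(S)|\ge k$) is the one nontrivial point, and it is handled correctly; note also that it uses only degree counting and so goes through verbatim for multigraphs, which justifies the paper's parenthetical remark that the result extends to that setting.
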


\noindent By Theorem \ref{thm:NadPul} we know that $c(M^*) \leq c(E)/3$, and thus by (\ref{costE}),
\begin{eqnarray} \label{costM}
c(M^*) \leq  -(n/6 + 2/3).
\end{eqnarray}

Now consider the graph $H$ we obtain by taking $G$, removing the edges of $M^* \cap R$, and adding the edges of $M^* \setminus R$. The resulting graph has even degree everywhere, and by Lemma \ref{lem:connected} it is connected.  Thus H is a spanning Eulerian subgraph of $G$. The number of edges in $H$ is
\begin{eqnarray*}
|E|  + |M^* \setminus R| - |M^* \cap R| = |E| + c(M^*) & \leq & 3n/2 - (n/6 + 2/3) \\ & = & 4n/3 - 2/3,
\end{eqnarray*}
\noindent as required.

\begin{figure}
\label{fig:Petersen4}

\subfloat[ ][Petersen graph $G=(V,E)$ ($n=10$).]
{\begin{tikzpicture}[scale=0.7]

\draw (1.55,0) node(d) [circle, draw] { };
\draw (6.55,0) node(c) [circle, draw] { };
\draw (8.1,4.75) node(b) [circle, draw] { };
\draw (0,4.75) node(e) [circle, draw] { };
\draw (4.05,7.68) node(a) [circle, draw] { };

\draw (2.73,1.62) node(i) [circle, draw] { };
\draw (5.37,1.62) node(h) [circle, draw] { };
\draw (6.2,4.13) node(g) [circle, draw] { };
\draw (1.9,4.13) node(j) [circle, draw] { };
\draw (4.05,5.68) node(f) [circle, draw] { };

\draw (4.55, 7.68) node() {$a$};
\draw (8.6, 4.75) node() {$b$};
\draw (7.05, 0) node() {$c$};
\draw (1.05, 0) node() {$d$};
\draw (-0.5, 4.75) node() {$e$};

\draw (4.55,5.68) node() {$f$};
\draw (6.7,3.9) node() {$g$};
\draw (5.85, 1.8) node() {$h$};
\draw (3.2, 1.4) node() {$i$};
\draw (1.4,3.9) node() {$j$};

\draw (a)--(b)--(c)--(d)--(e)--(a);
\draw (f)--(h)--(j)--(g)--(i)--(f);
\draw (a)--(f);
\draw (b)--(g);
\draw (c)--(h);
\draw (d)--(i);
\draw (e)--(j);
\end{tikzpicture}}
\qquad \subfloat[][Depth-first-search tree $T$, back edges $B$, edge-set $T_B$.]
{\begin{tikzpicture}[scale=0.85]

\draw (2,0) node(a) [circle,draw] { };
\draw (2,1.5) node(b) [circle,draw] { };
\draw (2,3) node(c) [circle,draw] { };
\draw (2,4.5) node(h) [circle,draw] { };
\draw (2,6) node(j) [circle,draw] { };
\draw (2,7.5) node(g) [circle,draw] { };
\draw (2,9) node(i) [circle,draw] { };
\draw (0.6,9.5) node(d) [circle,draw] { };
\draw (-0.8,10) node(e) [circle,draw] { };
\draw (3.4,9.5) node(f) [circle,draw] { };

\draw (2.3,-0.3) node() {$a$};
\draw (2.3, 1.2) node() {$b$};
\draw (2.3, 2.7) node() {$c$};
\draw (2.3, 4.2) node() {$h$};
\draw (2.3, 5.7) node() {$j$};
\draw (2.3, 7.8) node() {$g$};
\draw (2.3, 8.7) node() {$i$};
\draw (1, 9.7) node() {$d$};
\draw (-0.4,10.2) node() {$e$};
\draw (3.8, 9.7) node() {$f$};

\draw[->, ultra thick] (a)--(b);
\draw[->, ultra thick] (b)--(c);
\draw[->, ultra thick] (c)--(h);
\draw[->, ultra thick] (h)--(j);
\draw[->, ultra thick] (j)--(g);
\draw[->] (g)--(i);
\draw[->] (i)--(d);
\draw[->] (d)--(e);
\draw[->] (i)--(f);

\draw[->, dashed] (d) .. controls (0,7) and (0.7,3.5) .. (c);
\draw[->, dashed] (e) .. controls (-0.4,8) and (1,6.5) .. (j);
\draw[->, dashed] (e) .. controls (-2,6.5) and (0.5,0.5) .. (a);
\draw[->, dashed] (g) .. controls (3.5,6) and (3.5,3) .. (b);
\draw[->, dashed] (f) .. controls (3.5,7.5) and (3,5) .. (h);
\draw[->, dashed] (f) .. controls (5,7) and (4.5,0.5) .. (a);

\draw (4.5,1.5) node() {$B$};
\draw[dashed] (5,1.5)--(6.5,1.5);
\draw (4.5,0.5) node() {$T_B$};
\draw[ultra thick] (5,0.5)--(6.5,0.5);
\end{tikzpicture}}

\subfloat[][Edge weights $c$ and minimum weight perfect matching $M^*$, $c(M^*) = -3$.]
{\begin{tikzpicture}[scale=0.85]

\draw (2,0) node(a) [circle,draw] { };
\draw (2,1.5) node(b) [circle,draw] { };
\draw (2,3) node(c) [circle,draw] { };
\draw (2,4.5) node(h) [circle,draw] { };
\draw (2,6) node(j) [circle,draw] { };
\draw (2,7.5) node(g) [circle,draw] { };
\draw (2,9) node(i) [circle,draw] { };
\draw (0.6,9.5) node(d) [circle,draw] { };
\draw (-0.8,10) node(e) [circle,draw] { };
\draw (3.4,9.5) node(f) [circle,draw] { };

\draw (2.3,-0.3) node() {$a$};
\draw (2.3, 1.2) node() {$b$};
\draw (2.3, 2.7) node() {$c$};
\draw (2.3, 4.2) node() {$h$};
\draw (2.3, 5.7) node() {$j$};
\draw (2.3, 7.8) node() {$g$};
\draw (2.3, 8.7) node() {$i$};
\draw (1, 9.7) node() {$d$};
\draw (-0.4,10.2) node() {$e$};
\draw (3.8, 9.7) node() {$f$};

\draw (a)--(b);
\draw (b)--(c);
\draw decorate [decoration={name=snake, amplitude=1.5pt, segment length=5pt}] {(c)--(h)};
\draw (h)--(j);
\draw (j)--(g);
\draw (g)--(i);
\draw decorate [decoration={name=snake, amplitude=1.5pt, segment length=5pt}] {(i)--(d)};
\draw (d)--(e);
\draw (i)--(f);

\draw (d) .. controls (0,7) and (0.7,3.5) .. (c);
\draw decorate [decoration={name=snake, amplitude=1.5pt, segment length=5pt}] {(e) .. controls (-0.4,8) and (1,6.5) .. (j)};
\draw (e) .. controls (-2,6.5) and (0.5,0.5) .. (a);
\draw decorate [decoration={name=snake, amplitude=1.5pt, segment length=5pt}] {(g) .. controls (3.5,6) and (3.5,3) .. (b)};
\draw (f) .. controls (3.5,7.5) and (3,5) .. (h);
\draw decorate [decoration={name=snake, amplitude=1.5pt, segment length=5pt}] {(f) .. controls (5,7) and (4.5,0.5) .. (a)};

\draw (1.7, 0.75) node() {$-1$};
\draw (1.7, 2.25) node() {$-1$};
\draw (1.7, 3.75) node() {$-1$};
\draw (1.7, 5.25) node() {$-1$};
\draw (1.7, 6.75) node() {$-1$};

\draw (1.8, 8.25) node() {$1$};
\draw (1.4, 9.6) node() {$1$};
\draw (0, 10.1) node() {$1$};
\draw (2.6, 9.6) node() {$1$};

\draw (-0.4,2.5) node() {$-1$}; 
\draw (0.5,4.4) node() {$-1$}; 
\draw (-0.5,7.9) node() {$-1$}; 
\draw (3,8) node() {$-1$}; 
\draw (4.8,6) node() {$-1$}; 
\draw (3.6,4.5) node() {$-1$}; 

\draw (4.5,1.5) node() {$M^*$};
\draw decorate [decoration={name=snake, amplitude=1.5pt, segment length=5pt}] {(5,1.5)--(6.5,1.5)};
\end{tikzpicture}}
\qquad \subfloat[][Spanning Eulerian multi-subgraph $H$, $|E(H)| = 12.$]
{\begin{tikzpicture}[scale=0.7]

\draw (1.55,0) node(d) [circle, draw] { };
\draw (6.55,0) node(c) [circle, draw] { };
\draw (8.1,4.75) node(b) [circle, draw] { };
\draw (0,4.75) node(e) [circle, draw] { };
\draw (4.05,7.68) node(a) [circle, draw] { };

\draw (2.73,1.62) node(i) [circle, draw] { };
\draw (5.37,1.62) node(h) [circle, draw] { };
\draw (6.2,4.13) node(g) [circle, draw] { };
\draw (1.9,4.13) node(j) [circle, draw] { };
\draw (4.05,5.68) node(f) [circle, draw] { };

\draw (4.55, 7.68) node() {$a$};
\draw (8.6, 4.75) node() {$b$};
\draw (7.05, 0) node() {$c$};
\draw (1.05, 0) node() {$d$};
\draw (-0.5, 4.75) node() {$e$};

\draw (4.55,5.68) node() {$f$};
\draw (6.7,3.9) node() {$g$};
\draw (5.85, 1.8) node() {$h$};
\draw (3.2, 1.4) node() {$i$};
\draw (1.4,3.9) node() {$j$};

\draw (a)--(b)--(c)--(d)--(e)--(a);
\draw (f)--(h)--(j)--(g)--(i)--(f);
\draw (d)--(i);
\draw (d) .. controls (1.55,0.7) and (2.25,1.4) .. (i);
\end{tikzpicture}}
\caption{Illustration of the algorithm for the Petersen graph.}
\end{figure}
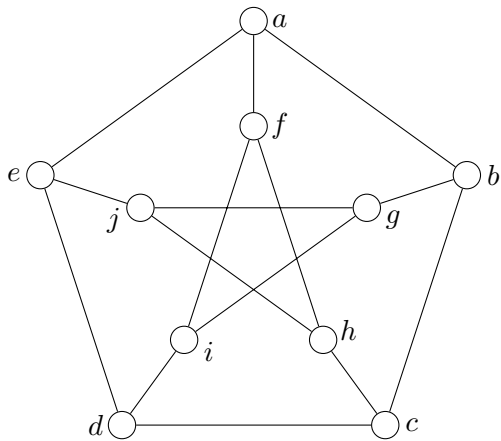
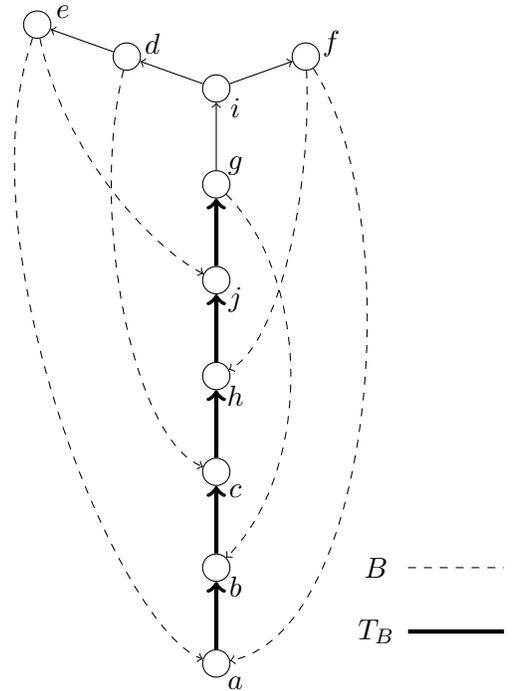
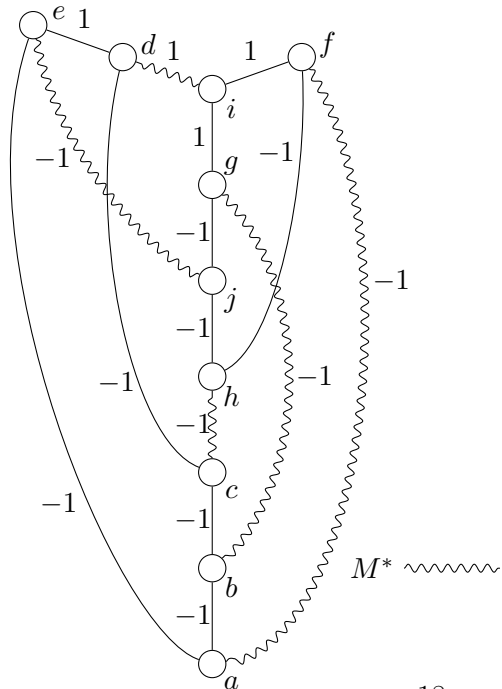
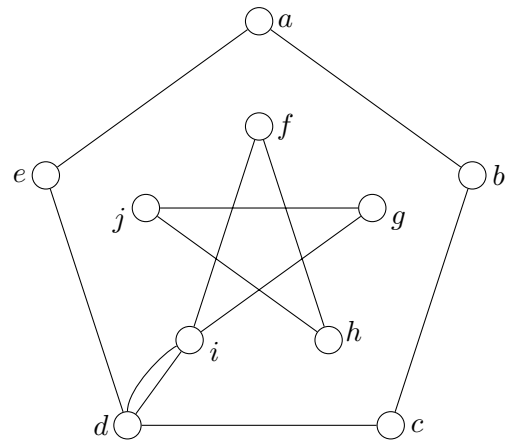

In Figure 2 we give an illustration of the algorithm when applied to the Petersen graph $G=(V,E)$ (Figure 2(a)).  Figure 2(b) shows a depth first search tree for $G$ using vertex $a$ as the root, with the back edges $B$ illustrated with dashed lines, and the corresponding edges $T_B$ illustrated with bold lines.  In Figure 2(c) the assigned edge weights $c_e$ are indicated for each edge $e\in E$, as well as the the minimum weight perfect matching $M^*$ of $G$ w.r.t. these edge weights (edges in $M^*$ are indicated with wavy lines).  Note that $c(M^*)=-3$. Also note that $R=T_B \cup B$ is the set of edges that are assigned a weight of $-1$. Finally, Figure 2(d) shows the spanning Eulerian multi-subgraph $H$ obtained by removing from $E$ all the edges of $M^*\cap R$ and adding an extra copy of the edges in $M^*\setminus R$. As can be seen, the number of edges in the final solution is $|E| + c(M^*) = 12 \leq 4n/3 - 2/3$, as required.

The running time of the algorithm described above is dominated by the time required to find a minimum cost perfect matching.  This step can be performed in $O(n(|E| + nlogn))$ time (see \cite{gabow}), which is $O(n^2logn)$ for cubic graphs.

As a result of the above analysis the next theorem follows.
\begin{theorem} \label{thm:main2}
Let $G=(V,E)$ be a bridgeless cubic multigraph with $n$ vertices. There is an $O(n^2logn)$ algorithm that finds a spanning Eulerian multi-subgraph $H$ of $G$ with at most $4n/3 - 2/3$ edges. \qed
\end{theorem}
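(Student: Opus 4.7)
The plan is to turn the construction sketched in Section~\ref{sec:3.1} into an explicit algorithm, then audit each step for correctness and running time.

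First I would compute a depth-first search tree $T$ of $G$ rooted at an arbitrary vertex $r$, take $B$ to be the resulting back edges, and form $T_B$ and the pairing $P=\{\{b,t_b\}:b\in B\}$ as defined before the theorem. Combining $|B|=|E|-(n-1)=n/2+1$ with the fact that each non-root vertex receives at most one back edge while the root receives exactly two gives $|T_B|=|B|-1$, so $|R|=n+1$ as in~(\ref{countpairset}) and $c(E)=-(n/2+2)$ as in~(\ref{costE}). This preprocessing is linear in $|E|$.

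Second, I would compute a minimum-weight perfect matching $M^*$ of $G$ with respect to the weights $c_e\in\{-1,+1\}$. Since $G$ is a cubic, $2$-edge connected multigraph and any cubic graph has an even number of vertices, Theorem~\ref{thm:NadPul}(ii) guarantees $c(M^*)\le c(E)/3=-(n/6+2/3)$. Third, I would output the multigraph $H$ obtained from $G$ by deleting the edges of $M^*\cap R$ and doubling the edges of $M^*\setminus R$. The edge count is $|E|+|M^*\setminus R|-|M^*\cap R|=|E|+c(M^*)\le 4n/3-2/3$. Every vertex meets exactly one edge of $M^*$, which is either deleted (degree drops from $3$ to $2$) or doubled (degree rises from $3$ to $4$), so $H$ has all even degrees. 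For connectivity, the key observation is that each pair $\{b,t_b\}$ shares the vertex $y$ that is both the head of $b$ and the tail of $t_b$, so a matching can contain at most one edge of any pair; Lemma~\ref{lem:connected} then yields that $G$ with $M^*\cap R$ removed is connected, and the extra copies in $M^*\setminus R$ cannot destroy connectivity.

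The main obstacle is the running time of the matching step, which dominates the algorithm. Using Gabow's minimum-weight perfect matching algorithm~\cite{gabow} this takes $O(n(|E|+n\log n))$ time, which is $O(n^2\log n)$ for $|E|=3n/2$. Everything else is a depth-first traversal plus linear-time bookkeeping, so the overall complexity matches the claim.
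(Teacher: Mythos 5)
Your proposal is correct and follows essentially the same route as the paper: DFS tree and back edges to define $R$ with $|R|=n+1$, the $\pm 1$ weighting, the Naddef--Pulleyblank bound $c(M^*)\le c(E)/3$, the delete/double construction of $H$, and Gabow's matching algorithm for the $O(n^2\log n)$ bound. Your explicit observation that each pair $\{b,t_b\}$ shares the vertex $y$, so that a perfect matching can contain at most one edge of each pair and Lemma~\ref{lem:connected} applies, is a detail the paper leaves implicit, and it is a worthwhile addition rather than a deviation.
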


\noindent Note that the result in Theorem \ref{thm:main2} is tight, for example, consider the graph $G$ which consists of 2 vertices joined by 3 parallel edges.  We not that for simple graphs we have not been able to find a better lower bound than $11n/9-8/9$ (see Section~\ref{sec:epilogue}).

The minimum cost perfect matching algorithm used in the above can be viewed as a more efficient derandomization of the randomized algorithm in \cite{momke} than the obvious derandomzation, which is to consider each of the perfect matchings in a convex combination of the all-$1/3$ vector in the perfect matching polytope (cf. Lemma~\ref{lem:ccofmatchings}); in \cite{momke} this convex combination is used in the interpretation as a probability distribution over the vertices of the perfect matching polytope.  In fact, given the sets $R$ and $P$ of any graph, not just cubic, these ideas are easily extended to provide a minimum cost perfect matching problem for the cubification of the graph used in \cite{momke} for general graphs, simply by using the weight function $c$ as described in
(\ref{costs}) for edges of the original graph, and setting $c_e := 0$ for all edges $e$ in the cubification which are not in the original graph. Thus, also for general graphs and the $1.461$ result in \cite{momke}, a more efficient derandomization is possible.

A final remark concerns an alternative view on how the above algorithm works. Basically, the algorithm combined a spanning tree $T^*$ with the perfect matching $M^*$. The spanning tree $T^*$ is the one obtained from the original depth first search tree $T$ by removing the edges of $M^* \cap T_B$ and replacing them by their partner back edges. The analysis shows that
\begin {eqnarray}
|M^* \cap T^*| \leq n/6 - 1/3.
\end {eqnarray}
The resulting spanning Eulerian multi-subgraph of the algorithm is then obtained by removing the edges of
$M^* \backslash T^*$ from $E$ and adding an extra copy of the edges $M^* \cap T^*$, indeed containing $n + 2(n/6 - 1/3)=4n/3 - 2/3$ edges.

In Figure 3 we show the spanning tree $T^*$ obtained in this way that corresponds to $M^*$  for the Petersen graph example of Figure 2.

\begin{figure}\label{fig:treeMatching}
\center
\begin{tikzpicture}[scale=0.8]

\draw (1.55,0) node(d) [circle, draw] { };
\draw (6.55,0) node(c) [circle, draw] { };
\draw (8.1,4.75) node(b) [circle, draw] { };
\draw (0,4.75) node(e) [circle, draw] { };
\draw (4.05,7.68) node(a) [circle, draw] { };

\draw (2.73,1.62) node(i) [circle, draw] { };
\draw (5.37,1.62) node(h) [circle, draw] { };
\draw (6.2,4.13) node(g) [circle, draw] { };
\draw (1.9,4.13) node(j) [circle, draw] { };
\draw (4.05,5.68) node(f) [circle, draw] { };

\draw (4.55, 7.68) node() {$a$};
\draw (8.6, 4.75) node() {$b$};
\draw (7.05, 0) node() {$c$};
\draw (1.05, 0) node() {$d$};
\draw (-0.5, 4.75) node() {$e$};

\draw (4.55,5.68) node() {$f$};
\draw (6.7,3.9) node() {$g$};
\draw (5.85, 1.8) node() {$h$};
\draw (3.2, 1.4) node() {$i$};
\draw (1.4,3.9) node() {$j$};

\draw[ultra thick] (a)--(b)--(c)--(d)--(e);
\draw (a)--(e);
\draw[ultra thick] (f)--(i)--(g)--(j)--(h);
\draw[ultra thick] (d) .. controls (1.55,0.7) and (2.25,1.4) .. (i);
\draw (f)--(h);

\draw decorate [decoration={name=snake, amplitude=1.5pt, segment length=5pt}] {(a)--(f)};
\draw decorate [decoration={name=snake, amplitude=1.5pt, segment length=5pt}] {(b)--(g)};
\draw decorate [decoration={name=snake, amplitude=1.5pt, segment length=5pt}] {(c)--(h)};
\draw decorate [decoration={name=snake, amplitude=1.5pt, segment length=5pt}] {(d)--(i)};
\draw decorate [decoration={name=snake, amplitude=1.5pt, segment length=5pt}] {(e)--(j)};

\draw (8.5,2.5) node() {$T^*$};
\draw[ultra thick] (9,2.5)--(10.5,2.5);

\draw (8.5,1.5) node() {$M^*$};
\draw decorate [decoration={name=snake, amplitude=1.5pt, segment length=5pt}] {(9,1.5)--(10.5,1.5)};

\end{tikzpicture}
\caption{Final correct matching $M^*$ and spanning tree $T^*$ pairing for the Petersen graph.}
\end{figure}
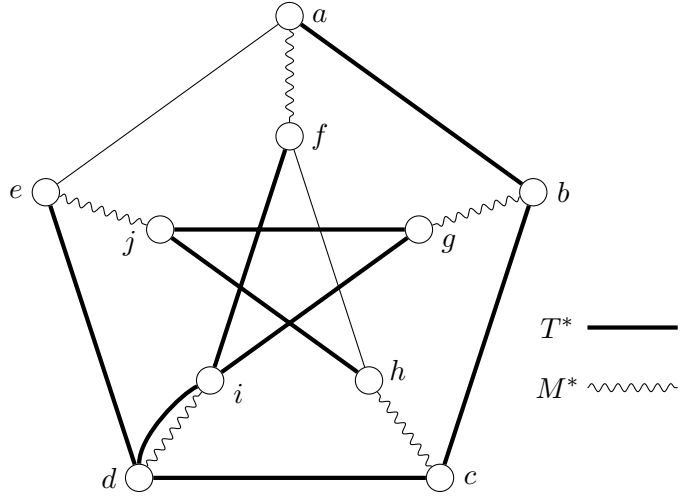

Seen in this way it becomes directly clear that the solution that the algorithm produces is in fact a cycle cover together with double edges to connect them.
This also immediately implies a bound on the number of cycles in the cycle cover:
\begin{corollary}
  Given any bridgeless cubic multigraph, there exists a cycle cover with at most $\lfloor n/6+2/3 \rfloor$ cycles.
\end{corollary}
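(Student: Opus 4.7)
The plan is to read off the conclusion directly from the alternative view of the algorithm that produced Theorem~\ref{thm:main2}, exploiting the fact that in a cubic graph the complement of a perfect matching is a $2$-regular spanning subgraph, i.e., a cycle cover.

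First, I would invoke the algorithm from Theorem~\ref{thm:main2} on the given bridgeless cubic multigraph $G=(V,E)$, producing the minimum-weight perfect matching $M^*$ (with respect to the weights $c$ defined in~\eqref{costs}) and the modified spanning tree $T^*$ described just before this corollary. Recall from the alternative view that the spanning Eulerian multi-subgraph $H$ is obtained from $E$ by removing the edges of $M^*\setminus T^*$ and duplicating the edges of $M^*\cap T^*$. Hence, as a multi-set of edges,
\[
H \;=\; (E\setminus M^*) \;\cup\; 2\,(M^*\cap T^*),
\]
with total size $n+2|M^*\cap T^*|$.

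Next I would observe the crucial structural point: since $G$ is cubic and $M^*$ is a perfect matching, every vertex of $(V, E\setminus M^*)$ has degree exactly $2$, so $E\setminus M^*$ is a cycle cover of $G$. Let $k$ denote the number of cycles in this cover. The subgraph $H$ is obtained by taking this cycle cover and adding $|M^*\cap T^*|$ edges, each doubled. Since $H$ is connected (this is exactly the conclusion of Lemma~\ref{lem:connected} used in the proof of Theorem~\ref{thm:main2}), contracting each cycle of the cover to a single vertex turns the $|M^*\cap T^*|$ doubled edges into a connected multigraph on $k$ vertices, and such a multigraph must have at least $k-1$ edges. Therefore
\[
k-1 \;\le\; |M^*\cap T^*|.
\]

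Finally, I would combine this with the bound $|M^*\cap T^*|\le n/6-1/3$ that was established in the proof of Theorem~\ref{thm:main2} (from $|H|=n+2|M^*\cap T^*|\le 4n/3-2/3$). This yields
\[
k \;\le\; |M^*\cap T^*|+1 \;\le\; \frac{n}{6}+\frac{2}{3},
\]
and since $k$ is a non-negative integer this forces $k\le\lfloor n/6+2/3\rfloor$, as claimed. There is essentially no hard step here; the only thing to be careful about is the observation that in a cubic graph the complement of any perfect matching is automatically a cycle cover, which is what lets us interpret the algorithm's output as ``a cycle cover plus a spanning collection of doubled connector edges'' without further work.
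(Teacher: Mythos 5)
Your proposal is correct and is essentially the paper's own argument: the paper derives the corollary from the ``alternative view'' that the algorithm's output is the cycle cover $E\setminus M^*$ plus the $|M^*\cap T^*|\le n/6-1/3$ doubled connector edges, together with the remark (made in the introduction) that a perfect matching and spanning tree with intersection of size $p$ yield a cycle cover with at most $p+1$ cycles. You have merely filled in the contraction/connectivity details that the paper leaves implicit, so there is nothing to correct.
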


\subsection{Subcubic bridgeless multigraphs.}\label{sec:subcubicwts}

Let $G=(V,E)$ be a bridgeless subcubic multigraph (i.e. all vertices in $G$ have degree 2 or 3) with $n$ vertices. Let $V_3$ and $V_2$ denote the sets of vertices of degree 3 and 2, respectively, and let $n_3 = \left| {V_3} \right|$ and $n_2 = \left| {V_2} \right|$.  We will prove that the equivalent form of Theorem \ref{thm:main2} for this type of graph also holds. Note that also for subcubic graphs the bound in the theorem is tight (consider a graph that consists of three paths of the same length joining 2 vertices).
\begin{theorem} \label{thm:main2subcubic}
 Let $G=(V,E)$ be a bridgeless subcubic multigraph with $n$ vertices. There is a $O(n^2logn)$ algorithm that finds a spanning Eulerian multi-subgraph $H$ of $G$ with at most $4n/3 - 2/3$ edges.
\end{theorem}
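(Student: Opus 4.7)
The plan is to mimic the cubic algorithm of Section~\ref{sec:3.1} with adjustments for degree-2 vertices. Writing $n_3 = |V_3|$, $n_2 = |V_2|$, so $n = n_2 + n_3$ and $|E| = n_2 + 3n_3/2$, a DFS of $G$ yields $|B| = |E| - (n-1) = n_3/2 + 1$ and $|R| = 2|B| - 1 = n_3 + 1$. With the $\pm 1$ weighting $c_e = -1$ on $R$ and $c_e = +1$ elsewhere, $c(E) = |E| - 2|R| = n_2 - n_3/2 - 2$ and the key identity
\[
|E| + \tfrac{1}{3}c(E) \;=\; \tfrac{4}{3}|E| - \tfrac{2}{3}|R| \;=\; \tfrac{4n}{3} - \tfrac{2}{3}
\]
shows that it is enough to produce in $O(n^2\log n)$ time a set $K \subseteq E$ of weight $c(K) \le c(E)/3$ so that $H := (E\setminus (K\cap R)) \cup (K\setminus R)$ is a spanning Eulerian multi-subgraph. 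For $H$ to be Eulerian we need $K$ to flip parity exactly at the odd-degree vertices, so $K$ must be a $V_3$-join; for connectivity Lemma~\ref{lem:connected} demands that $K$ contains at most one edge of each pair $\{b, t_b\} \in P$.

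That a $V_3$-join of weight at most $c(E)/3$ exists is polyhedral: the point $x^{\ast} := \tfrac{1}{3}\chi^E$ lies in the Edmonds--Johnson $V_3$-join polytope. The defining constraints reduce to $|\delta(S)| + |F| \ge 3$ for every $S \subseteq V$ and $F \subseteq \delta(S)$ with $|S\cap V_3| + |F|$ odd. For $F = \emptyset$, the parity identity $3|S\cap V_3| + 2|S\cap V_2| \equiv |\delta(S)| \pmod{2}$ (obtained by counting edge-endpoints inside $S$) together with bridgelessness ($|\delta(S)| \ge 2$) forces $|\delta(S)| \ge 3$ on every $V_3$-cut. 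For $F \ne \emptyset$, $|F| \ge 1$ and bridgelessness give $|\delta(S)| + |F| \ge 3$. Averaging over a convex decomposition of $x^{\ast}$ into $V_3$-join indicators therefore exhibits a $V_3$-join of weight at most $c(E)/3$.

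To realize $K$ algorithmically while also enforcing the pair constraint of Lemma~\ref{lem:connected}, I would follow the remark at the end of Section~\ref{sec:3.1} and compute a minimum-weight perfect matching on a cubification $G^c$ of $G$. Construct $G^c$ to be cubic, bridgeless, and on the same $n$ vertices --- e.g., by adding an auxiliary perfect matching $M'$ on $V_2$ (with a small parity-fixing gadget if $n_2$ is odd) --- assign weight $0$ to the auxiliary edges, and keep $c_e$ on edges of $E$. Theorem~\ref{thm:NadPul} applied to $G^c$ together with Gabow's $O(n^2 \log n)$ perfect-matching algorithm then produces a matching $M^c$ with $c^c(M^c) \le c^c(E^c)/3 = c(E)/3$, and the pair constraint on $K := M^c \cap E$ is inherited automatically from $M^c$ being a matching. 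The main obstacle is the parity at degree-2 vertices: if $M^c$ matches some $v \in V_2$ using a real edge rather than its auxiliary $M'$-edge, then doubling or removing that real edge leaves $v$ with odd degree in $H$. I plan to resolve this either by tuning the cubification so that every minimum-cost perfect matching of $G^c$ is forced to pick the auxiliary edge at each $v \in V_2$, or by a post-processing augmenting-path swap that exchanges any offending match for the auxiliary edge without increasing weight. Once this is handled, the weight arithmetic above immediately yields $|E(H)| \le 4n/3 - 2/3$ in $O(n^2 \log n)$ time.
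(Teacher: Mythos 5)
Your setup is sound: the arithmetic $|E|+c(E)/3=4n/3-2/3$ is correct, the observation that the edge set $K$ to be removed/doubled must be a $V_3$-join (odd degree at $V_3$, even at $V_2$) is exactly the right parity condition, and your polyhedral verification that $\frac13\chi^E$ lies in the $V_3$-join polytope is a clean existence argument. But the proof has a genuine gap at precisely the point you flag yourself: the existence argument ignores the pair constraint of Lemma~\ref{lem:connected} (so it does not give connectivity of $H$), and the algorithmic route via a cubification enforces the pair constraint but breaks the parity condition, since a minimum-weight perfect matching of $G^c$ may cover a degree-2 vertex $v$ by a real edge, leaving $v$ with degree 1 or 3 in $H$. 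Neither of your two proposed repairs works as stated: making the auxiliary edges very cheap to force them into the matching destroys the $c(E)/3$ bound coming from Theorem~\ref{thm:NadPul} (the bound degrades by $-\frac23 W$ where $W$ is the total auxiliary weight), and an ``augmenting-path swap'' that replaces a real matching edge at $v$ by the auxiliary edge has no reason to preserve either the perfect-matching property elsewhere or the weight bound. So the proof is not complete.

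The paper closes this gap differently, and the fix is worth internalizing: instead of cubifying by adding edges at degree-2 vertices, it \emph{suppresses} each maximal path $Q$ of degree-2 vertices into a single edge $e_Q$ of a cubic bridgeless multigraph $G'=(V_3,E')$, runs the DFS/matching machinery of Section~\ref{sec:3.1} on $G'$, and encodes the cost of translating a decision on $e_Q$ back to $G$ in the weight function itself: $c_{e_Q}=|Q_2|-1$ if $e_Q\in R$ (remove one edge of $Q$, double the rest) and $c_{e_Q}=|Q_2|+1$ if $e_Q\notin R$ (double all of $Q$). Because the whole path is treated as a unit, every internal degree-2 vertex ends up with degree 2 or 4, so parity is automatic, and Theorem~\ref{thm:NadPul} applied to $G'$ gives $c(M^*)\le c(E')/3=n_2/3-n_3/6-2/3$, whence $|E|+c(M^*)\le 4n/3-2/3$. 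If you want to salvage your write-up, replacing the auxiliary-matching cubification by this path-suppression step is the missing idea; the rest of your argument then goes through essentially as in the cubic case.
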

\begin{proof}
To begin, we replace every path $Q$ consisting of degree 2 vertices in $G$ by a single edge $e_Q$ to obtain a cubic bridgeless multigraph $G^{\prime} = (V_3,E^{\prime})$.  For every such path $Q$, let $Q_2$ be the set of degree 2 vertices that lie on it.  Note that $V_2$ is the union of all sets $Q_2$ over all the paths $Q$.  We now proceed as in the proof for cubic graphs above: We find the tree $T$ and set of edges $R$ for $G^{\prime}$, as well as a minimum weight perfect matching $M^*$, however this time we will use a different edge weight function:  For every edge $e_Q \in E^{\prime}$, let
\begin{displaymath}
c_{e_Q} = \left \{\begin{array}{rl}
                   \left|{Q_2}\right|-1 & \mbox{ if }e \in R, \\
                   \left|{Q_2}\right|+1 & \mbox{ if }e \in E^{\prime}\setminus R.
                   \end{array}
           \right.
\end{displaymath}

\noindent Using this weight function, by (\ref{costM}) and Theorem \ref{thm:NadPul} we have that
\begin{eqnarray} \label{costMsubcubic}
c(M^*) \leq  c(E^{\prime})/3 = (\sum_{e_Q \in E^{\prime}}{\left|{Q_2}\right|/3}) -(n_3/6 + 2/3) = n_2/3 -n_3/6 -2/3.
\end{eqnarray}

Now go back to the original graph $G$.  For every edge $e_Q$ (in $G^{\prime}$) that is in $M^*\setminus R$, add an extra copy of every edge in the corresponding path $Q$ to $E$, and for every edge $e_Q$ that is in $M^* \cap R$, take one edge away from path $Q$, and add an extra copy of every other edge in $Q$.  In this way, we are adding exactly $c(M^*)$ edges to $E$. As in Section \ref{sec:3.1}, this new graph will be a spanning Eulerian multi-subgraph $H$.  The number of edges in $H$ is
\begin{eqnarray*}
|E|  + c(M^*) \leq 1/2(3n_3 + 2n_2) + (n_2/3 -n_3/6 -2/3) = 4n/3 - 2/3,
\end{eqnarray*}
as required, and the theorem follows.
\qed
\end{proof}\smallskip

\section{Graphs with bridges}\label{sec:bridges}
We extend the analysis to any subcubic graph by studying bridges. Deleting the bridges of a graph splits it into separate components each of which is either a single vertex or a subcubic bridgeless graph. Let $h$ be the number of bridges in a graph and $s$ the number of vertices incident to more than one bridge.
\begin{theorem} \label{thm:subcubic_bridges}
For a subcubic graph with $h$ bridges, a TSP tour of length at most $(4/3)(n+h) - (2/3)(s+1)$ can be constructed.
\end{theorem}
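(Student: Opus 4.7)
The plan is to build a spanning connected Eulerian multi-subgraph $H$ of $G$ by decomposing $G$ at its bridges, applying Theorem~\ref{thm:main2subcubic} to each non-trivial $2$-edge-connected component, and stitching everything together with doubled copies of the bridges.

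First I would delete all $h$ bridges, obtaining $c$ components $G_1,\ldots,G_c$, some of which may be singleton vertices. Contracting each $G_i$ to a point turns the bridges of $G$ into a tree (the bridge-block tree), so $c=h+1$. Write $c_1$ for the number of singleton components and $n_i$ for the number of vertices of $G_i$. For every non-singleton $G_i$, which is a bridgeless subcubic multigraph on $n_i\geq 2$ vertices, Theorem~\ref{thm:main2subcubic} yields a spanning Eulerian multi-subgraph $H_i$ with $|E(H_i)|\leq 4n_i/3-2/3$. Let $H$ be the union of the $H_i$'s together with two copies of every bridge. It is routine to verify that $H$ spans $V$, is connected (the doubled bridges glue the $H_i$'s along the bridge-block tree), and has only even degrees (each vertex contributes an even amount from its $H_i$ plus twice its number of incident bridges); hence $H$ corresponds to a graph-TSP tour of length $|E(H)|$.

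Summing contributions gives
\[ |E(H)|\leq \sum_{n_i\geq 2}(4n_i/3-2/3)+2h = (4/3)(n-c_1)-(2/3)(c-c_1)+2h = (4/3)(n+h)-(2/3)(c_1+1), \]
where the last equality uses $c=h+1$.

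The only step that is not a routine calculation, and hence the main obstacle, is to verify $c_1\geq s$: every vertex $v$ incident to two or more bridges must actually form a singleton component after deletion of all bridges. For this I would argue that since $\deg(v)\leq 3$ and $v$ has at least two bridges, $v$ has at most one non-bridge edge; but any non-bridge edge $e=vw$ must lie on a cycle through $v$, and this cycle would have to traverse a second edge at $v$. Since bridges lie on no cycle, this second edge would also have to be a non-bridge edge at $v$, contradicting the fact that $v$ has at most one such edge. Therefore all edges at $v$ are bridges, $v$ is a singleton, $c_1\geq s$, and the bound becomes $(4/3)(n+h)-(2/3)(s+1)$.
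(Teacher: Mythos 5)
Your proof is correct and follows essentially the same route as the paper: delete the $h$ bridges to get $h+1$ components, apply Theorem~\ref{thm:main2subcubic} to each non-singleton component, and reconnect with doubled copies of the bridges. Your extra verification that every vertex incident to at least two bridges becomes a singleton (so $c_1\geq s$) is a detail the paper simply asserts (it states the singleton count equals $s$), and your inequality formulation is if anything slightly more careful.
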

\begin{proof} Removing the bridges yields $h+1$ bridgeless components, $s$ of them being single vertex components. Thus, there are $h+1-s$ subcubic components, for each of which we can find a TSP tour of length at most $(4/3)n'-2/3$, where $n'$ is the number of vertices in the component. Adding two copies of each bridge yields a TSP tour of length at most $(4/3)(n-s) - (2/3)(h+1-s) +2h = (4/3)(n+h) - (2/3)(s+1)$.
\qed
\end{proof}\smallskip

Since for a graph with $h$ bridges $n+2h-s$ is a lower bound both for the number of edges on an optimal tour and for the optimal solution of the SER  we obtain the following corollary.
\begin{corollary}\label{4over3subcubic}
For graph-TSP on subcubic graphs, there exists a polynomial-time $4/3$-approximation algorithm, and the integrality gap for SER is at most $4/3$.
\end{corollary}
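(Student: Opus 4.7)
The plan is to combine Theorem~\ref{thm:subcubic_bridges} with the lower bound $n+2h-s$ that the text establishes for both the optimal graph-TSP tour length and the optimum of SER. Given a subcubic graph $G$ with $n$ vertices, $h$ bridges, and $s$ vertices incident to more than one bridge, Theorem~\ref{thm:subcubic_bridges} supplies a constructive bound of $(4/3)(n+h) - (2/3)(s+1)$ on the length of a tour built by handling each bridgeless component with the $O(n^2 \log n)$ algorithm of Theorem~\ref{thm:main2subcubic} and then doubling the bridges. Bridges can be detected in linear time, so the whole procedure is polynomial, which will establish the algorithmic part of the corollary.

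Next I would invoke the lower bound: any spanning Eulerian multi-subgraph must traverse each bridge at least twice (otherwise one side of the cut becomes disconnected) and must use at least as many non-bridge edges as there are vertices in the non-trivial bridgeless components (each component needs to carry a cycle), giving at least $2h + (n-s)$ edges. The same bound applies to SER since SER is a relaxation of the integer problem. Denote the constructed tour length by $L$ and $\opt \ge L_{\mathrm{SER}} \ge n+2h-s$.

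The remaining step is the numerical check that $L/(n+2h-s) \le 4/3$, i.e.,
\[
\tfrac{4}{3}(n+h) - \tfrac{2}{3}(s+1) \;\le\; \tfrac{4}{3}(n+2h-s).
\]
Rearranging, this reduces to the inequality $s \le 2h+1$. To see this holds, note that each of the $s$ vertices that is incident to at least two bridges contributes at least $2$ to the total count $2h$ of bridge-endpoints, so $2s \le 2h$ and hence $s \le h \le 2h+1$. (The case $h=0$ forces $s=0$, which also satisfies the inequality.) This yields both the $4/3$-approximation ratio and, because the same bound is compared against $L_{\mathrm{SER}}$, the $4/3$ upper bound on the integrality gap.

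The main obstacle is not technical in this step: the heavy lifting is already done in Theorem~\ref{thm:main2subcubic} and Theorem~\ref{thm:subcubic_bridges}. The one place to be careful is the justification of $s \le 2h+1$, which must use a structural observation about bridge-endpoint counts rather than any appeal to the subcubic degree bound; this is what makes the argument uniform in $h$ and valid even in degenerate cases such as $h=0$ or graphs that consist entirely of bridges joined at cut-vertices.
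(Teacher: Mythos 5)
Your route is the paper's own: combine Theorem~\ref{thm:subcubic_bridges} with the lower bound $n+2h-s$ on both the optimal tour length and SER, and verify $(4/3)(n+h)-(2/3)(s+1)\le (4/3)(n+2h-s)$. Your algebra is correct --- the inequality reduces to $s\le 2h+1$, which you derive from the bridge-endpoint count $2s\le 2h$, i.e.\ $s\le h$ --- and this matches (indeed, cleans up) the paper's computation, whose displayed numerator $(4/3)h-(2/3)s-2/3$ carries a sign slip and should read $+2/3$, exactly your reduction. Your justification of the tour lower bound is terser than it should be (``each component needs to carry a cycle'' hides the real point, namely that every vertex of a non-trivial bridgeless component must have non-bridge degree at least $2$ in any connected spanning Eulerian multi-subgraph, since a vertex whose only incident $H$-edges are copies of a single bridge could not reach the rest of its component), but the paper asserts this bound without proof, so you are not behind it there.

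The one genuine flaw is the sentence ``The same bound applies to SER since SER is a relaxation of the integer problem.'' That implication runs the wrong way: a relaxation has optimal value \emph{at most} that of the integer problem, so a lower bound on the optimal tour does not transfer to SER --- and it is precisely a lower bound on the SER optimum that the integrality-gap claim requires. The bound $\mathrm{SER}\ge n+2h-s$ must be argued directly on the linear program (summing the degree constraints yields $n$, as in the paper's footnote, and the cut constraints $x(\delta(S))\ge 2$ for the cuts induced by the bridges supply the additional $2h-s$ after accounting for overlap); the paper likewise only asserts this, but your stated reason is invalid and should be replaced by such a direct argument or an explicit citation. Everything else --- the polynomial running time via Theorem~\ref{thm:main2subcubic} plus linear-time bridge detection, and the handling of the degenerate cases $h=0$ and all-bridge graphs --- is fine.
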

\begin{proof}
Using $n+2h-s$ as lower bound we have
\begin{eqnarray*}
\frac{(4/3) (n+h) - (2/3)(s+1)}{(n+2h-s)} = 4/3 - \frac{(4/3)h - (2/3)s - 2/3}{n+2h-s},
\end{eqnarray*}
which is at most 4/3, since $s\leq h$ (except for the case where $s=n$, but then the graph is a tree and the corollary is trivially true).
\qed
\end{proof}\smallskip

\section{Epilogue}
\label{sec:epilogue} Very recently, remarkable progress has been made on the approximability of graph-TSP. In the table below we show the present  state of knowledge. It contains: (1st column) lower
bounds on the length of graph-TSP tours on $n$ vertices, for $n$ large enough,  (2nd column) upper bounds on them that we know how to construct,
(3rd column) lower bounds on the integrality gap of SER, (4th column) upper bounds on the integrality gap of SER, and (last column) upper bounds on the best possible approximation ratio. The bounds apply to bridgeless graphs, because they are the crucial ones within the classes. All lower bounds hold for simple graphs.

\begin{eqnarray*}
\begin{array}{l|lllll}
 &\text{TSP lb} & \text{TSP ub}& \text{SER lb}& \text{SER ub} &  \text{Approx.} \\ \hline
\text{General graphs}  &2n-4&2n-2&4/3&1.461&1.461 \\
\text{Subcubic graphs} &4n/3-2/3& 4n/3-2/3 & 4/3 & 4/3 & 4/3 \\
\text{Cubic graphs} &11n/9 -8/9&4n/3-2 & 7/6 & 4/3 & 4/3 \\
\end{array}
\end{eqnarray*}

The graph-TSP  lower bound for general graphs is given by the complete bipartite graph $K_{2,n-2}$ (on 2 and $n-2$ vertices). The  graph-TSP lower bound for cubic graphs we prove in a lemma. Notice that if we do not restrict to simple graphs then the graph with two vertices and three edges yields a lower bound of $2=(4/3)2-2/3$ for the cubic case.

\begin{figure}\label{fig:LBCubic}
\center
\begin{tikzpicture}[scale=0.65]

\draw (10,0) node(a1) [circle, draw] { };
\draw (10,1) node(a2) [circle, draw] { };
\draw (10,1.75) node(a3) [circle, draw] { };
\draw (10,2.75) node(a4) [circle, draw] { };
\draw (10,5) node(a9) [circle, draw ] { };
\draw (10,6) node(a10) [circle, draw] { };
\draw (10,6.75) node(a11) [circle, draw] { };
\draw (10,7.75) node(a12) [circle, draw] { };

\draw (8,0.5) node(b1) [circle, draw] { };
\draw (8,2.25) node(b2) [circle, draw] { };
\draw (8,5.5) node(b5) [circle, draw] { };
\draw (8,7.25) node(b6) [circle, draw] { };

\draw (6,1.375) node(c1) [circle, draw] { };
\draw (6,6.375) node(c3) [circle, draw] { };

\draw (3,3.875) node(d) [circle, draw] { $s$};

\draw (12,0.5) node(e1) [circle, draw] { };
\draw (12,2.25) node(e2) [circle, draw] { };
\draw (12,5.5) node(e5) [circle, draw] { };
\draw (12,7.25) node(e6) [circle, draw] { };

\draw (14,1.375) node(f1) [circle, draw] { };
\draw (14,6.375) node(f3) [circle, draw] { };

\draw (17,3.875) node(g) [circle, draw] { $t$};

\draw (a1)--(a2);
\draw (a3)--(a4);
\draw (a9)--(a10);
\draw (a11)--(a12);

\draw (a1)--(b1)--(a2)--(e1)--(a1);
\draw (a3)--(b2)--(a4)--(e2)--(a3);
\draw (a9)--(b5)--(a10)--(e5)--(a9);
\draw (a11)--(b6)--(a12)--(e6)--(a11);

\draw (b1)--(c1)--(b2);
\draw (b5)--(c3)--(b6);

\draw (e1)--(f1)--(e2);
\draw (e5)--(f3)--(e6);

\draw (d)--(c1);
\draw (d)--(c3);

\draw (g)--(f1);
\draw (g)--(f3);

\draw (d)--(g);

\end{tikzpicture}
\caption{Family of cubic graphs for which the optimal graph-TSP tour has length  $11n/9 - 8/9$.}
\end{figure}
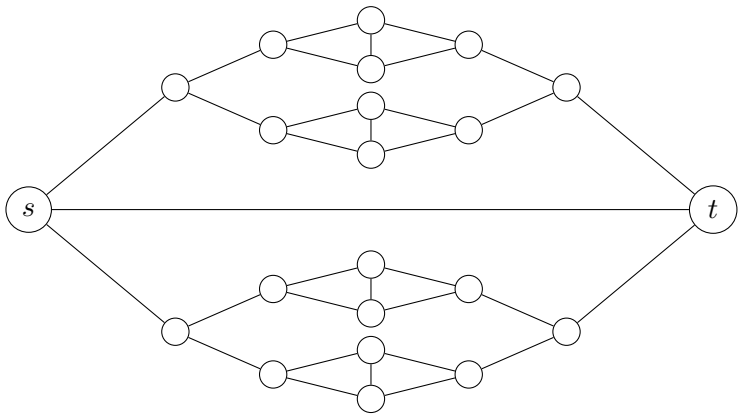

\begin{lemma}
For any $n_1$ there is cubic bridgeless graph on $n>n_1$ vertices such that the optimal tour has length at least $11n/9-8/9$.
\end{lemma}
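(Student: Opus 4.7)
The plan is to work with the family of graphs shown in Figure~\ref{fig:LBCubic}, parametrised by the common depth $d$ of the two balanced binary trees on the $s$- and $t$-sides. For each $d \geq 2$, $G_d$ has two hub vertices $s$ and $t$ joined by an edge $st$; rooted at $s$ is a complete binary tree with $\ell := 2^d$ leaves $b_1,\dots,b_\ell$ whose internal nodes are the ``$c$-vertices'', and symmetrically for $t$ with ``$f$-vertices'' and leaves $e_1,\dots,e_\ell$; and for each $i$, two new ``$a$-vertices'' $a_{2i-1}, a_{2i}$ are added so that $\{a_{2i-1}, a_{2i}, b_i, e_i\}$ with all edges except $b_ie_i$ forms a $K_4$ minus an edge (the ``pair gadget''). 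By construction $G_d$ is cubic, and it is bridgeless because every edge lies in a short cycle (either inside a $K_4\setminus e$ gadget, or in the tree structure together with the $st$ edge). Moreover $n_d = 6\cdot 2^d - 2 \to \infty$, so this family satisfies the size requirement.

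The core step is a lower bound on the edge count of any spanning Eulerian sub-multigraph $H$ of $G_d$. I split the multiplicities of $H$ into a pair-gadget part and a skeleton part. For each $i$, let $L_i$ be the sum over $H$ of the multiplicities of the seven edges incident to the cluster $\{a_{2i-1}, a_{2i}, b_i, e_i\}$, namely the five internal $K_4\setminus e$ edges together with the two ``bridge'' edges $b_ic$ and $e_if$ leading out to the skeleton. The local claim is $L_i \geq 5$: using only that each cluster vertex must have even degree at least $2$ in $H$ and that the cluster must be connected to the rest of $H$ via the bridge edges, a short case analysis shows $L_i \geq 5$, with equality achieved exactly by the ``straight'' configuration $b_i - a_{2i-1} - a_{2i} - e_i$ together with single copies of each bridge edge. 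A short parity check shows moreover that $(\mu(b_ic),\mu(e_if))$ must be either both odd or both even, and in the ``both even'' case the lower bound jumps to $L_i \geq 6$.

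Summing over the $\ell$ pair gadgets gives $|H| \geq 5\ell + S$, where $S$ is the total $H$-multiplicity on skeleton edges (edges with both endpoints in $\{s,t\}\cup\{c_j\}\cup\{f_{j'}\}$). A final bookkeeping step bounds $S$ using the parity constraints at each skeleton vertex (pinned by the bridge multiplicities at its incident pair gadgets) and the requirement that $H$ be globally connected; adding this to $5\ell$ yields $|H| \geq 11n_d/9 - 8/9$. In the base case $d=2$ illustrated in Figure~\ref{fig:LBCubic} this gives $n=22$ and $|H| \geq 5\cdot 4 + 6 = 26 = 11\cdot 22/9 - 8/9$, tight. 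The main obstacle is the coordination between the local bound and the skeleton bound, because the bridge multiplicities feed into both: any attempt to save on $S$ by flipping some $(\mu(b_ic),\mu(e_if))$ from (odd,odd) to (even,even) raises the corresponding $L_i$ by at least $1$, so no trade-off can push the total below $11n_d/9 - 8/9$. This is settled by treating the two feasible parity cases uniformly and checking the bound in each.
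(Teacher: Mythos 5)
There is a genuine gap, and it is located exactly where your sketch is vaguest: the ``final bookkeeping step''. Your argument, as described, would prove $|H|\ge 11n_d/9-8/9$ \emph{uniformly for every depth} $d$ of the construction, but that statement is false. The paper's recursion ($T(k)=\min\{6+2T(k-1),\,2+2P(k-1)\}$, $P(k)=4+T(k-1)+P(k-1)$ with $T(1)=10$, $P(1)=12$) shows that the optimum alternates between two closed forms: it equals $\tfrac{22}{3}2^k-\tfrac{10}{3}=11n/9-8/9$ only for \emph{even} $k$, while for odd $k$ it is $\tfrac{22}{3}2^k-\tfrac{14}{3}=11n/9-20/9$, which is strictly smaller. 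Concretely, for $d=3$ (so $n=46$) there is a spanning Eulerian subgraph with $54$ edges, whereas $11\cdot 46/9-8/9=498/9\approx 55.3$. In that solution your decomposition gives $\sum_i L_i\ge 5\ell=40$ and hence $S\le 14$, contradicting the $S\ge(7\ell-10)/3\approx 15.3$ that your bookkeeping would need to produce. So no combination of the (correct) local bound $L_i\ge 5$ with a depth-independent skeleton/parity count can reach the target: the extra $+4$ edges that distinguish even from odd depth come from a parity obstruction that propagates up through all $\log n$ levels of the two trees, and a single ``local gadgets plus skeleton'' accounting is blind to the parity of the depth.

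To repair this you must do two things. First, restrict the family to even depths (the lemma only needs infinitely many $n$, so this is legitimate, and it is what the paper implicitly does). Second, replace the one-shot skeleton bound by an induction on the depth: for each subtree-pair define the minimum cost of an Eulerian fragment according to whether the ``virtual top edge'' is used $0$ or $1$ times (the paper's $T(\cdot)$ and $P(\cdot)$), derive the coupled recurrences from the parity and connectivity constraints at the two roots of the sub-gadget, and solve them. Your local analysis of the $K_4$-minus-an-edge gadgets is correct and is essentially the base case $T(1)=10$, $P(1)=12$ of that induction, but without tracking the used/unused status of the top edge level by level you cannot certify the $-8/9$ constant.
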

\begin{proof}
Take two complete binary trees and connect their leaves as in Figure 4 and add an edge between the two roots $s$ and $t$. Let $2k+2$ be the distance from $s$ to $t$ not using edge  $st$.  Denote the corresponding graph by $F_k$. The example shows $F_2$. In general $k\ge 1$ and $n=6\cdot2^k-2$.
 Now let us compute an optimal TSP tour. Let $T(k),P(k)$ be the length of the shortest connected Eulerian subgraph in $F_k$ using edge $st$  respectively $0$ and $1$ times.
 Then, $T(1)=10$ and $P(1)=12$. Consider a minimum spanning connected Eulerian subgraph in $F_k$.  If it does not contain edge $st$, then the Eulerian subgraph either contains exactly one copy of each of the four edges incident to $st$ or three of these four edges doubled. In the first case
$T(k)=4+2(P(k-1)-1)$ and in the latter we have $T(k)=6+2T(k-1)$. Hence, $T(k)=\min\{6+2T(k-1),2+2P(k-1)\}$.

If the Eulerian subgraph does contain edge $st$ then is easy to see that $P(k)=5+T(k-1)+(P(k-1)-1)=4+T(k-1)+P(k-1)$.
Given the initial values $T(1)=10$ and $P(1)=12$ the values that follow from these equations are uniquely defined. One may verify that the following functions satisfy the equations.
\[
\begin{array}{lll}
T(k)=22/3\cdot 2^k-14/3,& P(k)=22/3\cdot 2^k-8/3&\text{for odd }k,\\
T(k)=22/3\cdot 2^k-10/3,& P(k)=22/3\cdot 2^k-10/3&\text{for even }k.
\end{array}
\]
For even $k$ the length of the optimal tour is $22/3\cdot 2^k-10/3=11n/9-8/9$.\qed
\end{proof}\smallskip

We believe that for simple cubic graphs there exists a polynomial-time algorithm with approximation ratio strictly less than 4/3. In fact, the problem is not known to be APX-hard.

The lower bound of $7/6$ on the integrality gap for cubic graphs is attained by the following graph. Connect two points by three equally long paths. Then replace every vertex of degree 2 by a 4-cycle with a chord so as to make the graph cubic.

Of course, the main research challenges remain to prove Conjecture~\ref{mainconj} or to show a $4/3$-approximation algorithm.
For general metric TSP even an approximation ratio strictly less than 3/2 is still wide open.  For graph-TSP M\"omke and Svensson \cite{momke} made a promising and important step. It seems that especially good lower bounds on optimal solutions are still lacking.

\begin{small}

\end{small}

\end{document}